\newcommand{\keywords}[1]{\par\addvspace\baselineskip\noindent\keywordname\enspace\ignorespaces#1}
\newtheorem{Claim}{Claim}
\begin{document}

\mainmatter  % start of an individual contribution

% first the title is needed

\title{Approximating the $k$-Set Packing Problem by Local Improvements}

% a short form should be given in case it is too long for the running head
%\titlerunning{Lecture Notes in Computer Science: Authors' Instructions}

% the name(s) of the author(s) follow(s) next
%
% NB: Chinese authors should write their first names(s) in front of
% their surnames. This ensures that the names appear correctly in
% the running heads and the author index.
%
\author{Martin F\"{u}rer
\and Huiwen Yu}
%
%\authorrunning{Lecture Notes in Computer Science: Authors' Instructions}
% (feature abused for this document to repeat the title also on left hand pages)

% the affiliations are given next; don't give your e-mail address
% unless you accept that it will be published

\institute{
 Department of Computer Science and Engineering\\
 The Pennsylvania State University, University Park, PA, USA.\\
 }
%\emailpsu \\
%\mailsb\\
%\mailsc\\
%\url{http://www.springer.com/lncs}}

%
% NB: a more complex sample for affiliations and the mapping to the
% corresponding authors can be found in the file "llncs.dem"
% (search for the string "\mainmatter" where a contribution starts).
% "llncs.dem" accompanies the document class "llncs.cls".
%
%\toctitle{Lecture Notes in Computer Science}
%\tocauthor{Authors' Instructions}
\clearpage
\maketitle
\thispagestyle{empty}

\begin{abstract}
We study algorithms based on local improvements for the $k$-Set Packing problem. The well-known local improvement algorithm by Hurkens and Schrijver \cite{schrijver} has been improved by Sviridenko and Ward \cite{ksetpacking2013} from $\frac{k}{2}+\epsilon$ to $\frac{k+2}{3}$, and by Cygan \cite{bestksetpacking} to $\frac{k+1}{3}+\epsilon$ for any $\epsilon>0$.
%The local improvements considered in \cite{ksetpacking2013,bestksetpacking} are of logarithmic size.

In this paper, we achieve the approximation ratio $\frac{k+1}{3}+\epsilon$ for the $k$-Set Packing problem using a simple polynomial-time algorithm based on the method by Sviridenko and Ward \cite{ksetpacking2013}. With the same approximation guarantee, our algorithm runs in time singly exponential in $\frac{1}{\epsilon^2}$, while the running time of Cygan's algorithm \cite{bestksetpacking} is doubly exponential in $\frac{1}{\epsilon}$. On the other hand, we construct an instance with locality gap $\frac{k+1}{3}$ for any algorithm using local improvements of size $O(n^{1/5})$, here $n$ is the total number of sets. Thus, our approximation guarantee is optimal with respect to results achievable by algorithms based on local improvements.

\keywords{$k$-set packing, tail change, local improvement, color coding}

\end{abstract}

%\newpage

\section{Introduction}

Given a universe of elements $U$ and a collection $\mathcal{S}$ of subsets with size at most $k$ of $U$, the $k$-Set Packing problem asks to find a maximum number of disjoint sets from $\mathcal{S}$. The most prominent approach for the $k$-Set Packing problem is based on local improvements. In each round, the algorithm selects $p$ sets from the current packing and replaces them with $p+1$ sets such that the new solution is still a valid packing. It is well-known that for any $\epsilon>0$, there exists a constant $p$, such that the local improvement algorithm has an approximation ratio $\frac{k}{2}+\epsilon$ \cite{schrijver}. In quasi-polynomial time, the result has been improved to $\frac{k+2}{3}$ \cite{Halldorsson1995} and later to $\frac{k+1}{3}+\epsilon$ for any $\epsilon>0$ \cite{sellhyperedge} using local improvements of size $O(\log n)$, here $n$ is the size of $\mathcal{S}$. In \cite{sellhyperedge}, the algorithm looks for any local improvement of size $O(\log n)$, while in \cite{Halldorsson1995}, only sets which intersect with at most 2 sets in the current solution are considered and the algorithm looks for improvements of a binocular shape.

One can obtain a polynomial-time algorithm which looks for local improvements of logarithmic size using the color coding technique \cite{ksetpacking2013,bestksetpacking}. The algorithm in \cite{ksetpacking2013} looks for local improvements similar to \cite{Halldorsson1995} and has an approximation ratio $\frac{k+2}{3}$. In \cite{bestksetpacking}, local improvements of bounded pathwidth are considered and an approximation ratio $\frac{k+1}{3}+\epsilon$, for any $\epsilon>0$ is achieved.

%The algorithm of \cite{ksetpacking2013} has a better performance guarantee than \cite{schrijver} in all $k$'s except for $k=3$. We show that an algorithm looking for both types of local improvements in \cite{schrijver} and \cite{ksetpacking2013} has an improved approximation ratio of $\frac{k^2-2}{3k-4}+\epsilon$, which is at most $\frac{k+4/3}{3}$. This result is better than both \cite{schrijver} and \cite{ksetpacking2013} for any $k$. We use the factor-revealing linear program introduced in \cite{factorreveal} for the analysis. We also observe that the result is tight for this algorithm.

In this paper, we obtain an approximation ratio $\frac{k+1}{3}+\epsilon$ for the $k$-Set Packing problem, for any $\epsilon>0$. On the other hand, we improve the lower bound given in \cite{ksetpacking2013} by constructing an instance that any algorithm using local improvements of size $O(n^{1/5})$ has a performance ratio at least $\frac{k+1}{3}$. Thus, our result is optimal with respect to the performance guarantee achievable by algorithms using local improvements. Our algorithm extends the types of local improvements considered in \cite{Halldorsson1995,ksetpacking2013} by first looking for a series of set replacements which swap some sets in the current packing $\mathcal{A}$ with a same number of disjoint sets $\mathcal{T}$ which are not in $\mathcal{A}$. We then look for local improvements which can be decomposed into cycles and paths, from sets in $\mathcal{S}\setminus(\mathcal{A}\cup\mathcal{T})$ which intersect with at most 2 sets in $\mathcal{A}$. We also use the color-coding technique \cite{colorcoding,fixparapacking} to ensure a polynomial time complexity when the local improvement has logarithmic size.
Our algorithm is more efficient as it runs in time singly exponential in $\frac{1}{\epsilon^2}$, while the running time of Cygan's algorithm \cite{bestksetpacking} is doubly exponential in $\frac{1}{\epsilon}$. We believe that our approach makes an important step towards a practical algorithm for the $k$-Set Packing problem.

{\bf Related works.} The Set Packing problem has been studied for decades. Hastad has shown that the general Set Packing problem cannot be approximated within $N^{1-\epsilon}$ unless $NP\subseteq ZPP$ \cite{setpackinglowerbound}. Here $N$ is the size of the universe $U$. The bounded Set Packing problem assumes an upper bound of the size of the sets. In the unweighted case, i.e. the $k$-Set Packing problem, besides algorithms based on local improvements \cite{schrijver,Halldorsson1995,sellhyperedge,ksetpacking2013,bestksetpacking}, Chan and Lau have shown that the standard linear programming algorithm has an integrality gap $k-1+1/k$ \cite{lpsdppacking}. They have also constructed a polynomial-sized semi-definite program with integrality gap $\frac{k+1}{2}$, but no rounding strategy is provided. The problem is also known to have a lower bound $\Omega(\frac{k}{\log k})$ \cite{ksetpackinglowerbound}. In the weighted case, Chandra and Halld\'{o}rsson have given a nontrivial approximation ratio $\frac{2(k+1)}{3}$ \cite{weightpacking2}. The result was improved to $\frac{k+1}{2}+\epsilon$ by Berman \cite{weightpacking}, which remains the best so far.

%They also show that any algorithm using local improvements of up to linear size has an approximation ratio at most $\frac{k}{3}$.

The paper is organized as follows. In Section 2, we review previous local search algorithms and define useful tools for analysis. In Section 3, we introduce the new local improvement and analyze its performance guarantee. In Section 4, we give an efficient implementation of our algorithm. In Section 5, we give a lower bound of algorithms based on local improvements for the $k$-Set Packing problem. We conclude in Section 6.

\section{Preliminaries}

\subsection{Local improvements}

Let $\mathcal{S}$ be a collection of subsets of size at most $k$ of the universe $U$ and the size of $\mathcal{S}$ is $n$. Let $\mathcal{A}$ be the collection of disjoint sets chosen by the algorithm. In this paper, we are interested in the unweighted $k$-set packing problem. We assume without loss of generality that every set is of uniform size $k$. Otherwise, we could add distinct elements to any set until it is of size $k$. In the following context, we use calligraphic letters to represent collections of $k$-sets and capital letters to represent sets of vertices which correspond to $k$-sets.

The most widely used algorithm for the $k$-Set Packing problem is local search. The algorithm starts by picking an arbitrary maximal packing. If there exists a collection of $p+1$ sets $\mathcal{P}$ which are not in $\mathcal{A}$ and a collection of $p$ sets $\mathcal{Q}$ in $\mathcal{A}$, such that $(\mathcal{A}\setminus \mathcal{Q})\cup \mathcal{P}$ is a valid packing, the algorithm will replace $\mathcal{Q}$ with $\mathcal{P}$. We call it a {\it $(p+1)$-improvement}.

With $p$ being a constant which depends on $\epsilon$, it is well-known that this local search algorithm achieves an approximation ratio $\frac{k}{2}+\epsilon$, for any $\epsilon>0$ \cite{schrijver}.

\begin{theorem}[\cite{schrijver}]
\label{thm:hs}
For any $\epsilon>0$, there exists an integer $p=O(\log_k \frac{1}{\epsilon})$, such that the local search algorithm which looks for any local improvement of size $p$ has an approximation ratio $\frac{k}{2}+\epsilon$ for the $k$-Set Packing problem.
\end{theorem}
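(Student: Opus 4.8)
The plan is to compare a locally optimal packing $\mathcal{A}$ — one admitting no replacement of $q\le p$ of its sets by $q+1$ sets not in it — with a maximum packing $\mathcal{O}$, and to prove $|\mathcal{O}|\le(\frac{k}{2}+\epsilon)|\mathcal{A}|$ for a suitable $p=p(k,\epsilon)$. Two routine reductions come first. A $0$-for-$1$ exchange is simply the addition of a new disjoint set, so local optimality already forces $\mathcal{A}$ to be maximal; and since the sets of $\mathcal{A}\cap\mathcal{O}$ contribute equally to both families and deleting them (together with the elements they cover) only increases $|\mathcal{O}|/|\mathcal{A}|$ — here using $|\mathcal{O}|\ge|\mathcal{A}|$ — we may assume $\mathcal{A}\cap\mathcal{O}=\emptyset$. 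Write $A:=\mathcal{A}$, $O:=\mathcal{O}$ and form the conflict graph $G$ on $A\cup O$ with $O_i\sim A_j$ iff $O_i\cap A_j\neq\emptyset$. As each element lies in at most one set of each packing, every $O_i$ meets at most $k$ sets of $A$ and every $A_j$ meets at most $k$ sets of $O$; maximality of $\mathcal{A}$ gives $\deg_G(O_i)\ge 1$. Letting $n_1,n_2,n_{\ge 3}$ count the sets of $O$ of degree $1$, $2$, and at least $3$, an incidence count gives $n_1+2n_2+3n_{\ge 3}\le\sum_i\deg_G(O_i)\le k|A|$, hence $2|O|=n_1+(n_1+2n_2+3n_{\ge3})-n_{\ge3}\le n_1+k|A|$. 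Thus it suffices to prove $n_1\le 2\epsilon|A|$.

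The second step is to read local optimality as combinatorial constraints on $G$. If $O'\subseteq O$ has neighborhood $N(O')$ with $|N(O')|<|O'|$, then, the sets of $O'$ being pairwise disjoint and (since $A=\mathcal{A}$) disjoint from $\mathcal{A}\setminus N(O')$, exchanging $N(O')$ for $O'$ is a valid improvement of size $|N(O')|$, forbidden as soon as $|N(O')|\le p$. Two instances are all we need. First, no $A_j$ is the sole neighbor of two sets of $O$ (that is a $1$-for-$2$ improvement), so the degree-$1$ sets of $O$ have pairwise distinct anchors in $A$; call their anchor set $L$, so $|L|=n_1$. Second, let $M$ be the multigraph on vertex set $A$ whose edges are the degree-$2$ sets of $O$ (each joining its two anchors) and whose loops are the degree-$1$ sets (each at its anchor); then any connected sub-multigraph of $M$ on at most $p$ vertices having more edges than vertices is literally a forbidden improvement. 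In particular, if two vertices of $L$ were at distance below $p$ in $M$, a shortest path between them together with the two loops would be such a sub-multigraph; hence the anchors in $L$ are pairwise at distance at least $p$ in $M$.

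The crux is bounding $n_1=|L|$, and this is where improvements of nonconstant size are genuinely used. Consider the regime that governs the bound, $|O|$ close to $\frac{k}{2}|A|$: by $2|O|\le n_1+k|A|$ this forces $n_2$ close to $\frac{k}{2}|A|$, so $M$ is nearly $k$-regular and its balls grow geometrically, $|B_M(v,r)|\ge c\,k^{r}$ for $r=O(\log_k\frac1\epsilon)$. Since the radius-$\lfloor p/2\rfloor$ balls about the anchors of $L$ are pairwise disjoint, $n_1\cdot c\,k^{\lfloor p/2\rfloor}\le|A|$, giving $n_1\le c^{-1}k^{-\lfloor p/2\rfloor}|A|\le 2\epsilon|A|$ once $p=O(\log_k\frac1\epsilon)$, and then $|O|\le\frac12(2\epsilon|A|+k|A|)=(\frac{k}{2}+\epsilon)|A|$. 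When $M$ is far from regular, the incidence count already pins $n_2$ and $n_{\ge3}$ low enough that the ratio falls below $\frac{k}{2}$ using only a couple of exchange levels; Hurkens and Schrijver's combinatorial lemma handles both regimes at once.

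The main obstacle is precisely this last step. The naive bounds give only $n_1\le|A|$, hence the useless ratio $\frac{k+1}{2}$; one has to show that a locally optimal packing cannot hide many degree-$1$ sets in sparse parts of the conflict graph. The clean ball-counting argument is rigorous only in the dominant dense regime; the true argument replaces it by an inductive exchange scheme — grow from each anchor an alternating tree of depth $O(\log_k\frac1\epsilon)$ in $G$, and show that if there were too many such trees they would overlap or fold back, yielding a forbidden improvement. Calibrating the trade-off among the branching factor $k$, the tree depth, and the slack $\epsilon$ is the delicate point, and it is what produces the bound $p=O(\log_k\frac1\epsilon)$.
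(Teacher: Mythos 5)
The paper does not prove this statement at all---it is quoted as Theorem~\ref{thm:hs} directly from Hurkens and Schrijver---so your attempt has to stand on its own, and it does not: the central step is missing. Your reduction ``it suffices to prove $n_1\le 2\epsilon|A|$'' is only a sufficient condition, and as a purported consequence of local optimality it is simply false. Take $\mathcal{O}$ to consist of sets each of which meets exactly one set of $\mathcal{A}$, with all anchors distinct (e.g.\ each $O_i$ shares one element with $A_i$ and is otherwise new): no improvement of any size exists inside $\mathcal{A}\cup\mathcal{O}$, yet $n_1=|A|$. The true statement is a joint trade-off---when $n_1$ is large, $n_2$ and $n_{\ge 3}$ must be proportionally small so that the ratio still stays below $\frac{k}{2}+\epsilon$---and quantifying that trade-off is exactly the content of the Hurkens--Schrijver argument. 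Your last paragraph defers precisely this case (``$M$ far from regular'') to ``Hurkens and Schrijver's combinatorial lemma,'' which is circular: that lemma \emph{is} the theorem you are asked to prove. The actual proof is an inductive counting over alternating exchange trees yielding a bound of the form $\frac{|O|}{|A|}\le\frac{k(k-1)^r-k}{2(k-1)^r-k}$ for improvements of size $O(r)$, whose geometric convergence to $\frac{k}{2}$ is what produces $p=O(\log_k\frac{1}{\epsilon})$; none of that machinery appears in your writeup.

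The ball-growth argument also fails even in the ``dense regime'' you isolate. Average degree of $M$ close to $k$ does not give pointwise expansion around the anchors: the forbidden configurations only say that within distance $p$ of a loop the graph is a tree with maximum degree $k$, and that tree may branch as slowly as a path (degree-$2$ vertices of $M$ are perfectly compatible with local optimality). With only linear ball growth your disjoint-balls count gives $n_1\lesssim |A|/p$, which would force $p=\Theta(1/\epsilon)$ rather than $O(\log_k\frac{1}{\epsilon})$, and in the irregular case it gives nothing at all. The correct first steps you do have---maximality, discarding $\mathcal{A}\cap\mathcal{O}$, the incidence bound $2|O|\le n_1+k|A|-n_{\ge3}$, distinct anchors, and anchors pairwise far apart in $M$---are the easy part; the theorem's difficulty is entirely in the step you left as a sketch, so the proposal has a genuine gap rather than an alternative proof.
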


Halld\'{o}rsson \cite{Halldorsson1995} and later Cygan et al. \cite{sellhyperedge} show that when $p$ is $O(\log n)$, the approximation ratio can be improved at a cost of quasi-polynomial time complexity. Based on the methods of \cite{Halldorsson1995}, Sviridenko and Ward \cite{ksetpacking2013} have obtained a polynomial-time algorithm using the color coding technique \cite{colorcoding}.
We summarize their algorithm as follows. Let $\mathcal{A}$ be the packing chosen by the algorithm and $\mathcal{C}=\mathcal{S}\setminus \mathcal{A}$. Construct an {\it auxiliary multi-graph $G_{A}$} as follows. The vertices in $G_{A}$ represent sets in $\mathcal{A}$. For any set in $\mathcal{C}$ which intersects with exactly two sets $s_1,s_2\in\mathcal{A}$, add an edge between $s_1$ and $s_2$. For any set in $\mathcal{C}$ which intersects with only one set $s\in \mathcal{A}$, add a self-loop on $s$. The algorithm searches for local improvements which can be viewed as {\it binoculars} in $G_{A}$. They call them {\it canonical improvements} \cite{ksetpacking2013}. A binocular can be decomposed into paths and cycles. The color coding technique \cite{colorcoding} and the dynamic programming algorithm are employed to efficiently locate paths and cycles of logarithmic size. This algorithm has an approximation ratio at most $\frac{k+2}{3}$.

\begin{theorem}[\cite{ksetpacking2013}]
\label{thm:sw}
With $p=4\log n+1$, there exists a polynomial-time algorithm which solves the $k$-Set Packing problem with an approximation ratio $\frac{k+2}{3}$.
\end{theorem}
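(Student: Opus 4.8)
The plan is to prove the two halves of the claim separately: an \emph{approximation part} --- any maximal packing $\mathcal{A}$ that admits no canonical improvement (binocular in $G_A$) involving at most $p=4\log n+1$ sets satisfies $|\mathcal{O}|\le\frac{k+2}{3}|\mathcal{A}|$ for every packing $\mathcal{O}$ --- and a \emph{running-time part} --- each round of the local search (look for such an improvement, otherwise halt) takes polynomial time, and there are at most $n$ rounds. For the approximation part I would first delete the sets of $\mathcal{A}\cap\mathcal{O}$ from both collections: since the ratio exceeds $1$ this can only raise $|\mathcal{O}|/|\mathcal{A}|$, and it preserves both maximality of $\mathcal{A}$ and the absence of canonical improvements, so it suffices to bound the ratio when $\mathcal{A}\cap\mathcal{O}=\emptyset$. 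Split $\mathcal{O}$ into $\mathcal{O}_1,\mathcal{O}_2,\mathcal{O}_{\ge 3}$ according to whether a set of $\mathcal{O}$ meets exactly one, exactly two, or at least three sets of $\mathcal{A}$; maximality forbids a set of $\mathcal{O}$ disjoint from $\mathcal{A}$. Double counting the pairs $(o,a)$ with $o\in\mathcal{O}$, $a\in\mathcal{A}$, $o\cap a\neq\emptyset$, and using that each of the $k$ elements of $a$ lies in at most one set of the packing $\mathcal{O}$, yields $|\mathcal{O}_1|+2|\mathcal{O}_2|+3|\mathcal{O}_{\ge 3}|\le k|\mathcal{A}|$. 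Since $3|\mathcal{O}|=\bigl(|\mathcal{O}_1|+2|\mathcal{O}_2|+3|\mathcal{O}_{\ge 3}|\bigr)+\bigl(2|\mathcal{O}_1|+|\mathcal{O}_2|\bigr)$, the bound $|\mathcal{O}|\le\frac{k+2}{3}|\mathcal{A}|$ will follow from the displayed inequality together with $2|\mathcal{O}_1|+|\mathcal{O}_2|\le 2|\mathcal{A}|$, which is what remains to prove.

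To get that inequality I would build the auxiliary multigraph $G'$ on vertex set $\mathcal{A}$: an edge for every set of $\mathcal{O}_2$ (joining the two sets it meets) and a loop for every set of $\mathcal{O}_1$ (at the one set it meets). Two loops at a common vertex already form a canonical improvement on three sets, so each vertex carries at most one loop, giving $|\mathcal{O}_1|\le|\mathcal{A}|$; more generally every cycle or loop of $G'$ is a valid equal-size swap (the $\mathcal{O}$-sets involved are pairwise disjoint and touch only the $\mathcal{A}$-sets inside it), and two independent cycles/loops of one component, together with a connecting path, constitute a binocular, i.e.\ a canonical improvement. If no component of $G'$ were over-dense --- carrying more than twice as many edges-and-loops as vertices --- a per-component count would already give $2|\mathcal{O}_1|+|\mathcal{O}_2|\le 2|\mathcal{A}|$. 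So the heart of the matter is to show that an over-dense component contains a binocular on only $O(\log n)$ sets: iteratively deleting vertices of degree $\le 2$ keeps the component over-dense and exposes a subgraph of minimum degree $\ge 3$, which by a Moore-type bound has a cycle of length $O(\log n)$; growing a ball of logarithmic radius around that cycle captures a second independent short cycle, and the two joined by a short path form a binocular on at most $p=4\log n+1$ sets. As $\mathcal{A}$ admits none, no component is over-dense and the approximation part follows.

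For the running time, observe that a canonical improvement replaces $t$ sets of $\mathcal{A}$ by $t+1$ sets, so $|\mathcal{A}|$ grows by exactly one per round (re-extending to a maximal packing in between only helps), giving at most $n$ rounds. Within a round I must test whether $G_A$ contains a binocular on at most $p=O(\log n)$ vertices whose associated sets of $\mathcal{S}\setminus\mathcal{A}$ are pairwise disjoint. I would apply color coding: color the vertices of $G_A$ with $p$ colors, so that a fixed target binocular becomes rainbow with probability at least $p!/p^{p}\ge e^{-p}=1/\mathrm{poly}(n)$; a rainbow binocular --- two cycles linked by a path, a pattern of bounded pathwidth --- can then be found by dynamic programming over the $2^{p}\cdot\mathrm{poly}(n)=\mathrm{poly}(n)$ states indexed by the used color set and the position within the pattern, checking at each shared vertex that the incident sets of $\mathcal{S}\setminus\mathcal{A}$ do not overlap. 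Trying $O(e^{p}\log n)=\mathrm{poly}(n)$ colorings, or derandomizing with a polynomial-size perfect hash family, makes each round deterministic and polynomial, hence so is the whole algorithm.

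The step I expect to be the main obstacle is the one in the middle: proving that a too-large gap forces not merely some binocular but a binocular of size $O(\log n)$ --- in fact at most $4\log n+1$ --- so that logarithmically bounded improvements suffice. The double counting, the reduction to the disjoint case, and the color-coding implementation are routine once this localization of the over-dense structure is established.
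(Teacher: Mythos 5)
First, note that the paper does not prove this statement at all: Theorem~\ref{thm:sw} is quoted from Sviridenko and Ward \cite{ksetpacking2013}, and the paper only summarizes their algorithm (binoculars in $G_A$, color coding, Lemma~\ref{lemmacycle}). So your proposal has to be judged against the known Sviridenko--Ward argument and the related machinery in this paper. Your skeleton (reduction to $\mathcal{A}\cap\mathcal{O}=\emptyset$, the degree partition, the double count $|\mathcal{O}_1|+2|\mathcal{O}_2|+3|\mathcal{O}_{\ge 3}|\le k|\mathcal{A}|$, reducing everything to $2|\mathcal{O}_1|+|\mathcal{O}_2|\le 2|\mathcal{A}|$, and the local-search/color-coding outer loop) is the right one. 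But the step you yourself flag as the ``main obstacle'' contains a genuine gap, not just a missing technicality. If ``edges-and-loops'' in your density condition counts each loop once, then ``no over-dense component'' only yields $|\mathcal{O}_1|+|\mathcal{O}_2|\le 2|\mathcal{A}|$, which together with $|\mathcal{O}_1|\le|\mathcal{A}|$ gives $2|\mathcal{O}_1|+|\mathcal{O}_2|\le 3|\mathcal{A}|$ and hence only the ratio $\frac{k+3}{3}$. To get the needed coefficient $2$ on $|\mathcal{O}_1|$ you must charge each loop twice, and then the ``dense $\Rightarrow$ short binocular'' lemma you sketch (iterated degree-$\le 2$ deletion, Moore bound, ball growing --- essentially Lemma~\ref{lemmacycle}) no longer applies as stated: an isolated vertex carrying one loop already sits exactly at the threshold with no binocular whatsoever, and a component may contain a loop and a far-away cycle (a binocular, but of size $\gg\log n$) without violating anything your algorithm excludes. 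The real content of Halld\'orsson's and Sviridenko--Ward's key lemma is exactly this loop-aware strengthening: one must argue that a looped vertex within $O(\log n)$ of another loop or of a short cycle yields a small improvement, and that loops which are pairwise far apart are paid for by the many loop-free vertices separating them, with the plain density lemma reserved for the loop-free part. In the present paper's own generalization (Theorem~\ref{thm:ls3}) the analogous coefficient $2$ on degree-one sets is not obtained by a density argument at all, but by the matching process to null and surplus edges (Lemma~\ref{lemma:matchb1} and inequality (\ref{relationab1})), which is the kind of extra mechanism your sketch is missing.

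A secondary, fixable flaw is in the implementation: you propose to color the \emph{vertices} of $G_A$ with $p$ colors. That certifies simplicity of the binocular but not that the corresponding sets of $\mathcal{S}\setminus\mathcal{A}$ are pairwise disjoint: two non-adjacent edges of the binocular may share an element lying outside $\bigcup\mathcal{A}$, a non-local constraint your DP's ``check at each shared vertex'' cannot see. Both \cite{ksetpacking2013} and this paper therefore color the \emph{elements} of the universe (about $kp$ colors) and search for colorful improvements; your round count and derandomization comments are otherwise fine.
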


\begin{comment}
We remark that this approximation ratio is tight. A tight example is as follows. Assume $\mathcal{C}=\mathcal{S}\setminus \mathcal{A}$ can be partitioned into two collections, one collection $\mathcal{C}_1$ of $|\mathcal{A}|$ sets where every set intersects with 1 set in $\mathcal{A}$, and the other collection $\mathcal{C}_3$ of $\frac{k-1}{3}|\mathcal{A}|$ sets where every set intersects with 3 sets in $\mathcal{A}$. It is easy to see that there is no canonical improvement.

\begin{figure}[!t]
\centering
\label{canonicalimprovement}
\includegraphics[width=4in]{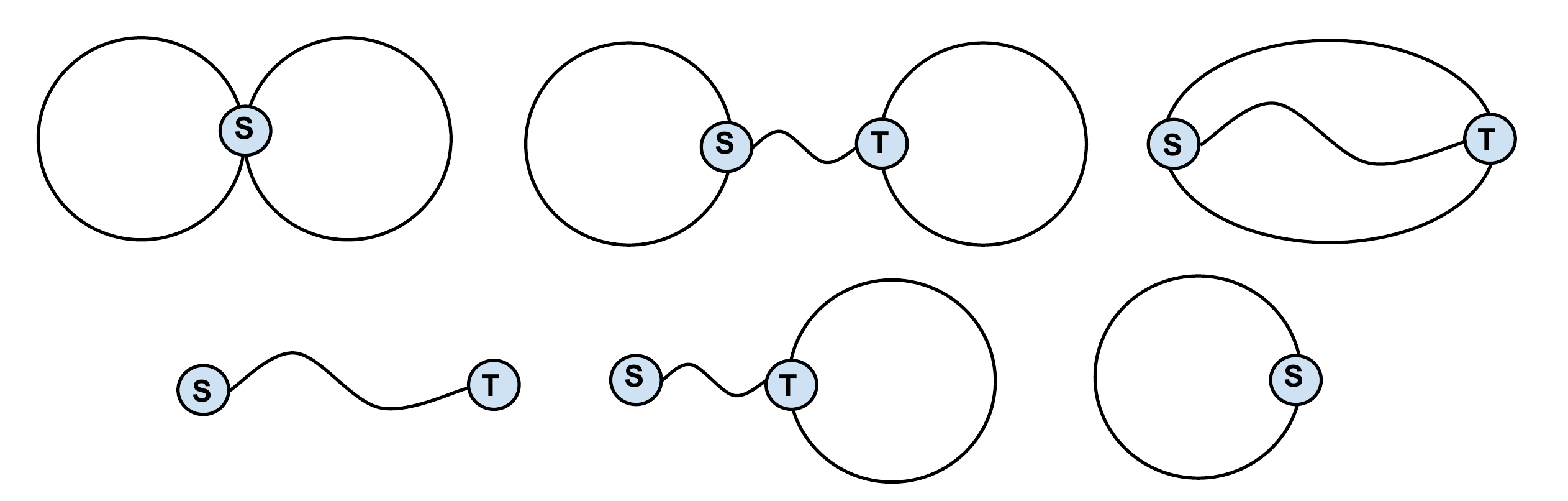}
\caption{Canonical improvements. The three types in the first row are considered in \cite{ksetpacking2013}.}
\end{figure}
\end{comment}

Cygan \cite{bestksetpacking} has shown that an approximation ratio $\frac{k+1}{3}+\epsilon$ can be obtained in polynomial time by restricting the local improvements from anything of size $O(\log n)$ \cite{sellhyperedge} to local improvements of bounded pathwidth. Namely, let $G(A,C)$ be the {\it bipartite conflict graph} where $\mathcal{A}$ and $\mathcal{C}=\mathcal{S}\setminus\mathcal{A}$ represent one part of vertices respectively. For any $u\in A, v\in C$, if the corresponding sets are not disjoint, we put an edge between $u$ and $v$. For any disjoint collection $P\subseteq C$, if the subgraph induced by $P$ and the neighbors of $P$, $N(P)$ in $A$ have bounded pathwidth, a set replacement of $P$ with $N(P)$ is called a local improvement of bounded pathwidth. The color coding technique is also employed for efficiently locating such an improvement.

\begin{theorem}[\cite{bestksetpacking}]
For any $\epsilon> 0$, there exists a local search algorithm which runs in time $2^{O(kr)}n^{O(pw)}$ with an approximation ratio $\frac{k+1}{3}+\epsilon$ of the k-Set Packing problem. Here $r=2(k+1)^{\frac{1}{\epsilon}}\log n$ is the upper bound of the size of a local improvement, $pw=2(k+1)^{\frac{1}{\epsilon}}$ is the upper bound of pathwidth.
\end{theorem}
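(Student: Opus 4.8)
The plan is to split the argument into a purely combinatorial part that bounds the approximation ratio of a local optimum and an algorithmic part that implements the search for a bounded-pathwidth improvement. For the combinatorial part, I would fix an optimal packing $\mathcal{O}$ and assume that $\mathcal{A}$ admits no valid local improvement of pathwidth at most $pw$ and size at most $r$. Since a local optimum is a maximal packing, every set of $\mathcal{O}$ meets at least one set of $\mathcal{A}$, so I can partition $\mathcal{O}$ into $\mathcal{O}_1,\mathcal{O}_2,\mathcal{O}_{\ge 3}$ according to whether a set meets exactly one, exactly two, or at least three sets of $\mathcal{A}$. Because each element of $U$ lies in at most one set of $\mathcal{A}$ and at most one set of $\mathcal{O}$, counting the element-incidences between $\mathcal{O}$ and $\mathcal{A}$ gives $|\mathcal{O}_1|+2|\mathcal{O}_2|+3|\mathcal{O}_{\ge 3}|\le k|\mathcal{A}|$, hence $3|\mathcal{O}|\le k|\mathcal{A}|+\bigl(2|\mathcal{O}_1|+|\mathcal{O}_2|\bigr)$. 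So it suffices to prove $2|\mathcal{O}_1|+|\mathcal{O}_2|\le(1+3\epsilon)|\mathcal{A}|$, which then yields $|\mathcal{O}|/|\mathcal{A}|\le\frac{k+1}{3}+\epsilon$.

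To get that inequality I would build the auxiliary multigraph on vertex set $\mathcal{A}$ with one self-loop for each set of $\mathcal{O}_1$ (at its unique neighbour) and one edge for each set of $\mathcal{O}_2$ (between its two neighbours). A connected component with vertex set $X$ carrying a family $Y\subseteq\mathcal{O}_1\cup\mathcal{O}_2$ has $N(Y)=X$, so it is already a local improvement whenever $|Y|>|X|$; the same holds if a single $\mathcal{O}_{\ge 3}$-set with neighbourhood inside $X$ can be appended, or if several such pieces are spliced together through $\mathcal{O}_{\ge 3}$-sets. Forbidding improvements of any size gives only the pseudoforest bound $|\mathcal{O}_1|+|\mathcal{O}_2|\le|\mathcal{A}|$, which merely recovers the ratio $\frac{k+2}{3}$ of Theorem~\ref{thm:sw}; the extra content is to save an additional $|\mathcal{O}_1|$. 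For this I would argue that, up to an $\epsilon$-fraction, every self-loop-bearing component can be charged to a private $\mathcal{A}$-vertex: otherwise a dense region surfaces --- a chain of self-loop components linked through $\mathcal{O}_2$- and $\mathcal{O}_{\ge 3}$-sets --- and in such a region the surplus accumulates to give a genuine improvement. The hard part will be showing that this improvement can always be chosen with \emph{bounded} pathwidth and \emph{logarithmic} size simultaneously; I would peel the dense region in $\frac{1}{\epsilon}$ rounds, each round branching over the at most $k+1$ sets that can conflict with a fixed set, so that the width of the resulting path decomposition stays below $pw=2(k+1)^{1/\epsilon}$ and the total number of sets stays below $r=pw\log n$, the $\log n$ factor being the search depth needed before the surplus is forced to materialise rather than cancel.

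For the algorithmic part, I would show that one phase either produces a valid improvement of size $\le r$ and pathwidth $\le pw$ or certifies that none exists, in time $2^{O(kr)}n^{O(pw)}$. First I would apply color coding \cite{colorcoding,fixparapacking}: colour $U$ so that the at most $kr$ elements covered by the sets of a hypothetical improvement receive pairwise distinct colours, which an $(|U|,kr)$-perfect hash family of size $2^{O(kr)}\log|U|$ guarantees for at least one colouring. On a colourful instance I would run a dynamic program that sweeps a path decomposition of width $\le pw$ of the bipartite conflict graph, keeping for each bag the set of colours already used together with the $\le pw$ sets currently in the bag; this locates an improvement in $n^{O(pw)}$ time. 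Multiplying the $2^{O(kr)}$ colourings by this dynamic program gives $2^{O(kr)}n^{O(pw)}$ per phase, and since every successful phase strictly increases $|\mathcal{A}|\le n$, there are at most $n$ phases, for a total running time of $2^{O(kr)}n^{O(pw)}$ as claimed.
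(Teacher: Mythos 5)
This theorem is not proved in the paper at all: it is quoted verbatim from Cygan \cite{bestksetpacking} as background, so there is no in-paper proof to compare against, and your proposal has to be judged as a reconstruction of Cygan's argument. Its outer skeleton is sound and standard: the incidence count $|\mathcal{O}_1|+2|\mathcal{O}_2|+3|\mathcal{O}_{\ge 3}|\le k|\mathcal{A}|$, the reduction of the theorem to $2|\mathcal{O}_1|+|\mathcal{O}_2|\le(1+3\epsilon)|\mathcal{A}|$, the auxiliary multigraph with self-loops for $\mathcal{O}_1$ and edges for $\mathcal{O}_2$, and the color-coding-plus-dynamic-programming implementation in time $2^{O(kr)}n^{O(pw)}$ are all correct and are exactly the frame in which such results are proved (the minor imprecision that the dynamic program cannot ``sweep a path decomposition of the bipartite conflict graph'' --- that graph has no bounded-width decomposition; one must enumerate candidate bags of at most $pw$ sets, which your $n^{O(pw)}$ state count implicitly does --- is easily repaired).

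The genuine gap is the combinatorial core. Everything beyond the pseudoforest bound $|\mathcal{O}_1|+|\mathcal{O}_2|\le|\mathcal{A}|$, i.e.\ beyond the $\frac{k+2}{3}$ ratio of Theorem~\ref{thm:sw}, rests on the claim that if $2|\mathcal{O}_1|+|\mathcal{O}_2|>(1+3\epsilon)|\mathcal{A}|$ then some valid improvement of size at most $r=2(k+1)^{1/\epsilon}\log n$ \emph{and} pathwidth at most $pw=2(k+1)^{1/\epsilon}$ must exist. In your write-up this is precisely the step introduced by ``I would argue that \ldots every self-loop-bearing component can be charged to a private $\mathcal{A}$-vertex'' and conceded as ``the hard part'': no charging scheme is defined, no construction of the dense region is given, and no argument is supplied for why peeling it in $\frac{1}{\epsilon}$ rounds with branching factor $k+1$ yields a replacement set $Y$ with $|Y|>|N(Y)|$ whose induced conflict graph simultaneously meets both the size bound and the pathwidth bound with exactly these parameters (note also that a set of $\mathcal{C}$ meets at most $k$, not $k+1$, sets of $\mathcal{A}$, so the origin of the base $k+1$ needs an actual derivation from the recursion). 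That lemma --- local optimality with respect to bounded-pathwidth improvements forces the extra saving of $|\mathcal{O}_1|$ --- is the entire content of Cygan's theorem (and the analogue of what Section~3 of this paper must work hard to establish for its tail-change improvements), so as it stands the proposal is a plausible plan rather than a proof: it rigorously delivers only the $\frac{k+2}{3}$ bound and defers the new part.
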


\subsection{Partitioning the bipartite conflict graph}

Consider a bipartite conflict graph $G(A,B)$ where one part of the vertices $A$ representing sets $\mathcal{A}$ chosen by the algorithm and the other part $B$ representing an arbitrary disjoint collection of sets $\mathcal{B}$. We assume without loss of generality that $\mathcal{B}\cap \mathcal{A}=\emptyset$. The collection $\mathcal{B}$ can be thought of an optimal solution. It is only used for analysis.

Given $\epsilon>0$, let $c_k=k-1$, $b=|B|$, we further partition $G(A, B)$ iteratively as follows. Let $B_1^1$ be the set of vertices in $B$ with degree 1 to $A$. Denote the neighbors of $B_1^1$ in $A$ by $A_1^1$. If $|B_1^1|<\epsilon b$, stop the partitioning. Otherwise, we consider $B_1^2$ which is the set of vertices whose degree drops to 1 if we remove $A^1_1$. Denote the neighbors of $B_1^2$ in $A\setminus A^1_1$ by $A_1^2$. If $|B_1^1\cup B_1^2|< c_k\epsilon b$, stop the partitioning. In general for any $j\geq 2$, let $B_1^j$ be the set of vertices with their degree dropping to 1 if the vertices in $\cup_{l=1}^{j-1}A_1^l$ are removed, and let $A_1^j$ be the neighbors of $B_1^j$ which are not in $\cup_{l=1}^{j-1}A^l_1$. If $|\cup_{l=1}^j B_1^l|<c_k^{j-1}\epsilon b$, we stop. Otherwise continue the partitioning. Let $i$ be the smallest integer such that $|\cup_{l=1}^i B_1^l|<c_k^{i-1}\epsilon b$. This integer $i$ exists as $c_k^{i-2}\epsilon b\leq |\cup_{l=1}^{i-1} B_1^l|\leq b$, we have $i\leq 2+\log_{c_k}\frac{1}{\epsilon}$.
Let $B_1^{\leq j}$ ($A_1^{\leq j}$) be the set union $\cup_{l=1}^j B_1^l$ ($\cup_{l=1}^j A_1^l$), for $j\geq 1$.
%We give an illustration of the partition in Figure 1.
%Let $B_1= \cup_{j=1}^{i} B_1^i$ and $A_1=A_1\cup A_1^i$.

\begin{comment}
Furthermore, let $B_2$ be the collection of vertices of degree 2 to $A$ and with no neighbors in $A_1$. Let $B_3^2$ be the collection of vertices with degree at least 3 to $A$ and degree 2 to $A\setminus A_1$. Let $B_3^3$ be the collection of sets with degree at least 3 to $A$ and degree at least 3 to $A\setminus A_1$.

If we first run Algorithm LI1 with parameter $p=O(\frac{k}{\epsilon})$ to ensure that there is no local improvement of size $p$, there will be no vertex $b$ in $B$ such that after removing some $B_1^j$, its degree drops to 0. Otherwise, suppose $b$ has all its neighbors $N(b)\subseteq A_1$ and $N(b)$ have neighbors $N(N(b))\subseteq B_1$. We have $|N(N(b))|>|N(b)|$. Hence, there exists a local improvement by replacing $N(b)$ with $N(N(b))$. Therefore, $B_1, B_2, B_3^2, B_3^3$ are all possible types of vertices in $B$ and they are mutually disjoint. Hence, $B=B_1\cup B_2\cup B_3^2 \cup B_3^3$. We denote the size of $B_1,B_2,B_3^2,B_3^3$ by $b_1,b_2,b_3^2,b_3^3$ respectively. An illustration of the definitions is given in Figure 2.
\end{comment}

\begin{comment}
\begin{figure}[!t]
\centering
\label{auxiliarygraph}
\includegraphics[width=3.5in]{auxiliarygraph1.pdf}
\caption{The bipartite conflict graph.}
\end{figure}
\end{comment}

\section{Canonical Improvements with Tail Changes}

\begin{comment}
Before introducing the new local search algorithm, we show that the performance guarantee of the $k$-Set Packing problem can be improved to $\frac{k^2-2}{3k-4}+\epsilon$ by using both local improvements
in \cite{ksetpacking2013} and \cite{schrijver} with parameter $p=O(\frac{k}{\epsilon})$. This new algorithm ({\bf Algorithm LI2*}) iterates by first running Algorithm LI1 until
there is no local improvements of constant size available. Then it switches to Algorithm LI2 and searches for canonical improvements of $O(\log n)$ size.

\begin{theorem}
\label{thm:ls21}
For any $\epsilon >0$, Algorithm LI2* has an approximation ratio $\frac{k^2-2}{3k-4}+\epsilon$. In particular, the 3-Set Packing problem has an approximation ratio $1.4+\epsilon$.
\end{theorem}

We include a proof of Theorem \ref{thm:ls21} in Appendix A.
\end{comment}

In this section, we present a local search algorithm based on \cite{ksetpacking2013}, and show that it achieves an approximation ratio $\frac{k+1}{3}+\epsilon$ for any $\epsilon>0$ for the $k$-Set Packing problem.

\subsection{The new local improvement}

In this section, we introduce a new type of local improvements. Let $\mathcal{A}$ be a packing chosen by the algorithm, and let $\mathcal{C}=\mathcal{S}\setminus \mathcal{A}$. We create the bipartite conflict graph $G(A,C)$ as in Section 2.1. Recall that only those sets in $\mathcal{C}$ which intersect with at most 2 sets in $\mathcal{A}$ are considered in \cite{ksetpacking2013}. Our approach tries to include sets of higher degree in a local improvement by swapping $p$ sets in $\mathcal{A}$ with $p$ sets in $\mathcal{C}$. In this way, if the degree of a vertex in $C$ drops to 2, it could be included in a local improvement.

\begin{definition}[Tail change]
Consider any vertex $v\in C$ of degree at least 3, we call a swapping of $p$ sets $U$ in $A$ with $p$ disjoint sets $V$ in $C$ a {\bf tail change} associated with an edge $(v, u)$ of $v$ if the following three requirements are satisfied: (1) $v\notin V$. (2) $u$ is the unique neighbor of $v$ in $U$. (3) The neighbors of $V$ in $A$ are exactly $U$.
The {\bf size} of this tail change is defined to be $p$.
\end{definition}

We denote a tail change associated with $e$ which swaps $U$ with $V$ by $T_e(U,V)$.
We say that two tail changes $T_e(U,V),T_{e'}(U',V')$ of vertices $v,v'$ respectively are {\it consistent}, if either $v\neq v'$ and $(\{v\}\cup V)\cap (\{v'\}\cup V')=\emptyset$, or $v=v'$, $e\neq e'$ and $V\cap V'=\emptyset$. Moreover we require that the degrees of $v,v'$ after the tail changes remain at least 2. Therefore, for any vertex $v\in C$ of degree $d\geq 3$, we could perform at most $d-2$ tail changes for $v$.

We are now ready to introduce the new local search algorithm. We first consider an algorithm that runs in quasi-polynomial time. Given parameter $\epsilon>0$, in each iteration, the algorithm starts by performing local improvements of constant size $O(\frac{k}{\epsilon})$. If no such local improvement is present, the algorithm starts looking for improvements of size $O(\log n)$. Construct the bipartite conflict graph $G(A,C)$. For any set $I$ of at most $\frac{4}{\epsilon}\log n$ vertices in $C$, let $I_3\subseteq I$ be the set of vertices of degree at least 3 in $G(A,C)$. The algorithm checks if there exists a collection of consistent tail changes each of size at most $\frac{2(k-1)}{\epsilon}$ for $I_3$, which together replace $U\subseteq A$ with $V\subseteq C$, such that $V\cap I=\emptyset$, and after the replacement the degree of every vertex in $I_3$ drops to 2. If so, the algorithm goes on checking in the auxiliary multi-graph $G_A$ where edges are constructed from vertices in $I$ assuming the swapping of $U$ with $V$ is performed, whether there is a subgraph which is one of the following six types (illustrated in Figure 1): (1) two cycles intersecting at a single point, (2) two disjoint cycles connecting by a path, (3) two cycles with a common arc, (those three types are binoculars also considered in \cite{ksetpacking2013}), (4) a path, (5) a path and a cycle intersecting at a single point, (6) a cycle. Let $U'$ be the vertices in this subgraph, and $V'$ be the edges. The algorithm checks if a replacement of $U\cup U'$ with $V\cup V'$ is an improvement. We call this new local improvement the {\bf canonical improvement with tail changes}, and this quasi-polynomial time algorithm, {\bf Algorithm LI} (LI stands for local improvement). We will explain the parameter settings in Algorithm LI in the next section.

\begin{figure}[!t]
\centering
\label{canonicalimprovement}
\includegraphics[width=4in]{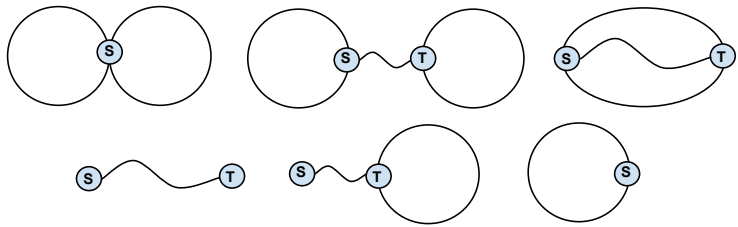}
\caption{Improvements composed of cycles and lines.}
\end{figure}

Before showing how to efficiently locate a canonical improvement with tail changes, we first show that the approximation ratio of Algorithm LI is $\frac{k+1}{3}+\epsilon$.

\subsection{Analysis}

Given a packing $A$ chosen by Algorithm LI and for an arbitrary packing $B$, consider the bipartite conflict graph $G(A,B)$ defined in Section 2.2. The notations in this section are taken from Section 2.2. First, we remark that since we make all $O(\frac{k}{\epsilon})$-improvements at the beginning of Algorithm LI, for any set $V\subseteq B$ of size $O(\frac{k}{\epsilon})$, there are at least $|V|$ neighbors of $V$ in $A$.
In $G(A, B)$, we make every vertex $a$ in $A$ full degree $k$ by adding self-loops of $a$ which we call {\it null edges}. We define a {\it surplus edge} which is either a null edge, or an edge incident to some vertex in $B$ which is of degree at least 3. We first show that there exists a one-to-one matching from almost all vertices in $B_1^1$ to surplus edges with the condition that after excluding the matched surplus edges of any vertex in $B$, the degree of this vertex remains at least 2. We define such a matching in the following matching process.

%{\bf The matching process.} Starting from a vertex $v'\in B_1^{1}$, go to its only neighbor $u'\in A_1$. If the degree of $u'$ is at most 2, match $v'$ to a surplus degree of $u'$, decrease $\widetilde{d}_{u'}$ by 1 and stop. Otherwise, if $u'$ has a neighbor $v$ in $B$ such that $v$ is unmatched and the degree of $v$ is at least 3, match $v'$ to one unmatched edge of $v$. Mark this edge as matched. If the degree of $v$ drops to 2 by excluding all matched edges, mark $v$ as matched. If $u'$ does not have a neighbor satisfying the requirement, try every neighbor $v_1$ of $u$ in $B_1$ and continue the process from $v_1$ until an unmatched edge is found. If $u$ does not have any neighbor in $B_1$, the matching process fails.

{\bf The matching process.} Pick an arbitrary order of vertices in $B_1^1$. Mark all edges and vertices as unmatched. Try to match every vertex with a surplus edge in this order one by one. For any vertex $v_1\in B_1^{1}$, starting from $v_1$, go to its neighbor $u_1\in A_1^1$. If $u_1$ has an unmatched null edge, match $v_1$ to it, mark this null edge as matched and stop. Otherwise, if $u_1$ has a neighbor $v$ in $B$, such that the degree of $v$ is at least 3, $(u_1,v)$ is unmatched and $v$ is unmatched, match $v_1$ to $(u_1,v)$ and mark this edge as matched. If the degree of $v$ drops to 2 by excluding all matched edges of $v$, mark $v$ as matched. If $u_1$ does not have a neighbor satisfying the requirement, try every neighbor $v_2$ (except $v_1$) of $u_1$ and continue the process from $v_2$. In general, suppose we are at a vertex $v_j\in B_1^j$ and it has a neighbor $u_j\in A_1^j$. We try to match $v_1$ with a null edge of $u_j$, or a surplus edge of an unmatched neighbor of $u_j$. If no matching edge is found, continue by trying every neighbor of $u_j$ in $B_1^{j_1}$ for $j_1>j$, until either $v_1$ is matched, or $j> 2+\log_{c_k}\frac{1}{\epsilon}$. In the latter case, we mark $v_1$ as unmatched.

\begin{figure}[!t]
\centering
\label{conflictgraph}
\includegraphics[width=3.5in]{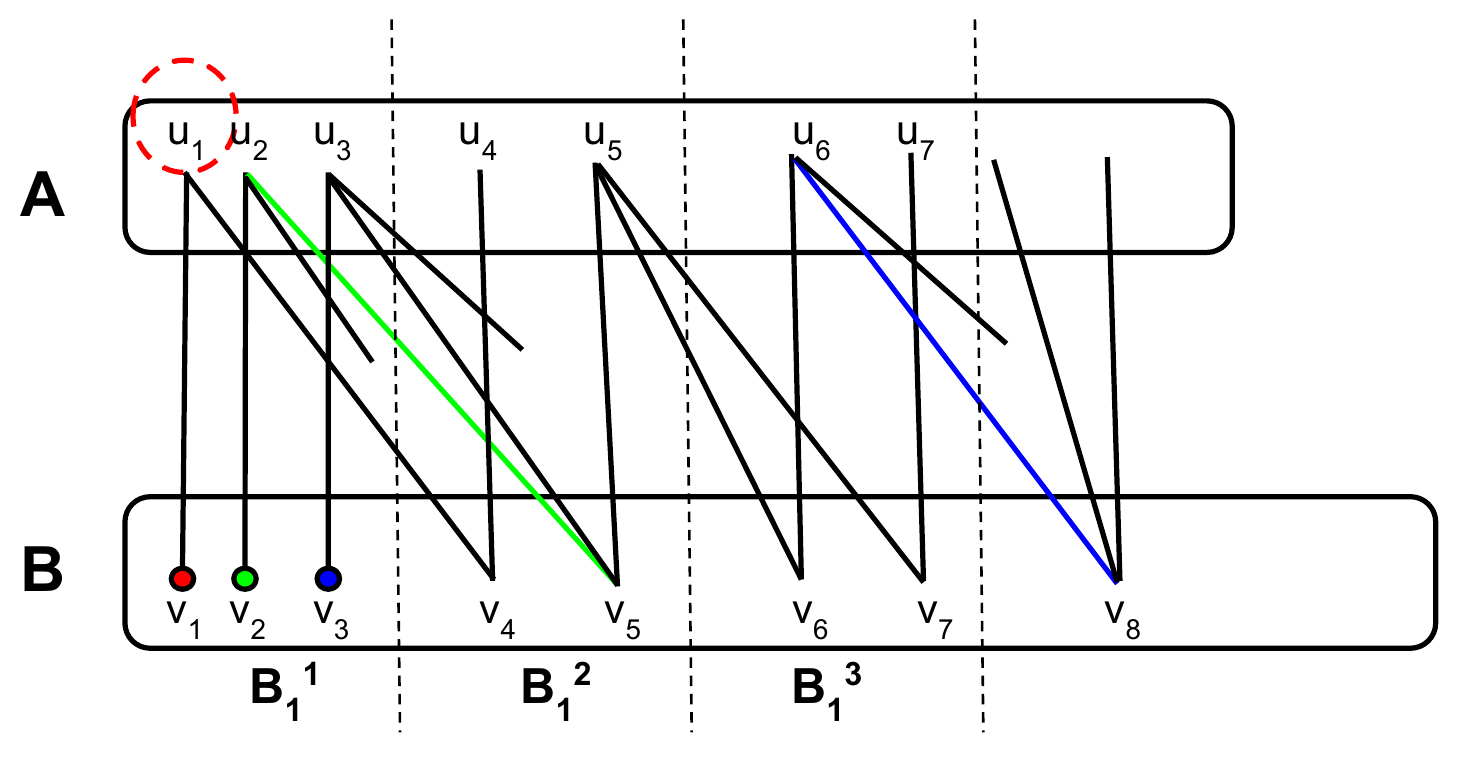}
\caption{The bipartite conflict graph. $k=3$.}
\end{figure}

We give an example of the matching process illustrated in Figure 2. The vertices and edges of the same color are matched. We match $v_1$ to the null edge (dotted line) of its neighbor $u_1$. $v_2$ is matched to a surplus edge $(u_2,v_5)$ of $v_5$. After that, the degree of $v_5$ drops to 2 by excluding the edge $(u_2,v_5)$ and $v_5$ is marked as matched. For $v_3$, we go on to $u_3$, $v_5$, $u_5$, $v_6$, $u_6$, then $v_8$ with a surplus edge $(u_6,v_8)$. We match $v_3$ to this edge.

\begin{comment}
\begin{figure}[!t]
\centering
\label{figmatch}
\includegraphics[width=3.5in]{tailchange2.pdf}
\caption{An example of the matching process for the 3-Set Packing problem. The self-loop in dotted line represents a null edge. Vertices and edges in the same color are matched. }
\end{figure}
\end{comment}

\begin{lemma}
\label{lemma:matchb1}
For any $\epsilon>0$, there exists a set of surplus edges $E_1$, such that except for at most $\epsilon |B|$ vertices, $B_1^{1}$ can be matched to $E_1$ one-to-one. Moreover, every endpoint of $E_1$ in $B$ has degree at least 2 after excluding $E_1$.
\end{lemma}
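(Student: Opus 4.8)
The plan is to run the matching process described just above the lemma and argue that it succeeds for all but $\epsilon|B|$ vertices of $B_1^1$. First I would set up the bookkeeping: the process, started from a vertex $v_1 \in B_1^1$, explores the tree $T$ rooted at the neighbor $u_1 \in A_1^1$ of $v_1$, alternating between $A$-layers $A_1^1, A_1^2, \dots$ and $B$-layers, and it stops either when it finds an unmatched null edge, an available surplus edge at some vertex of degree $\ge 3$, or when it reaches depth $j > 2 + \log_{c_k}\frac{1}{\epsilon}$. The key structural fact I would use is that, because Algorithm LI has first exhausted all local improvements of constant size $O(k/\epsilon)$, for every $V \subseteq B$ of size $O(k/\epsilon)$ there are at least $|V|$ neighbors of $V$ in $A$; in particular no vertex of $B$ can have all of its neighbors inside some $A_1^{\le j}$ with those neighbors having strictly more $B$-neighbors (that would be a small improvement), so the explored subgraph really is a tree with the layered structure of Section 2.2 and every branch eventually reaches $B_1^1$ (or a vertex of degree $\ge 3$).

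Next I would bound the number of unmatched vertices. Call $v_1 \in B_1^1$ \emph{bad} if the matching process for $v_1$ fails, i.e. the whole explored tree of depth $2 + \log_{c_k}\frac{1}{\epsilon}$ contains no unmatched null edge and no available surplus edge. If $v_1$ is bad, then every vertex $u$ in the $A$-part of its tree has full degree $k$ \emph{consumed by non-null edges}, meaning all $k$ edges at $u$ go to vertices of $B$; among those at least two go to vertices of degree $\ge 3$ whose surplus was already spent, or to degree-$2$ vertices — and crucially none of the null/surplus edges in the tree are available. The counting argument is: the partitioning of Section 2.2 was stopped at the first index $i$ with $|\cup_{l=1}^i B_1^l| < c_k^{i-1}\epsilon b$, which forces $i \le 2 + \log_{c_k}\frac{1}{\epsilon}$ and $|B_1^{\le i-1}| \ge c_k^{i-2}\epsilon b$. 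Since each matched surplus edge is used at most once and each vertex of $B$ of degree $d\ge 3$ contributes at most $d-2$ surplus edges while remaining of degree $\ge 2$, and each vertex of $A$ contributes $k - \deg_B$ null edges, a bad vertex $v_1$ witnesses that the entire supply of surplus edges reachable within its tree is already matched to earlier vertices of $B_1^1$. I would charge each bad $v_1$ either to a previously-matched vertex that blocks it, or show directly that the number of bad vertices is at most the deficiency $\epsilon b$ coming from the stopping condition of the partition — i.e. the vertices of $B_1^1$ that fall outside the ``well-structured'' part $B_1^{\le i}$, plus lower-order terms absorbed into $\epsilon|B|$.

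Then I would finish by checking the two clauses of the lemma. For the matching being one-to-one: each step marks its chosen null or surplus edge as matched and never reuses it, and a surplus edge at a degree-$\ge 3$ vertex $v$ is only taken while $v$ is still unmatched, after which $v$ is marked matched precisely when its residual degree hits $2$; so no vertex of $B$ ever drops below degree $2$ after removing its matched edges — which is exactly the ``moreover'' clause. The main obstacle I anticipate is the counting in the middle paragraph: making rigorous that the number of bad vertices is genuinely $O(\epsilon|B|)$ rather than merely ``small,'' which requires carefully pairing the geometric growth $c_k^{j-1}\epsilon b$ of the layer sizes against the $c_k = k-1$ branching of the exploration tree, and using the stopping index $i \le 2 + \log_{c_k}\frac{1}{\epsilon}$ to keep the tree depth (hence the eventual improvement size) logarithmic. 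I would handle this by a potential/amortization argument: assign to each matched surplus edge a ``credit'' and show that a bad $v_1$ forces $c_k^{i-1}$ credits already spent in a region of $B$ of size $< c_k^{i-1}\epsilon b$, so the bad vertices number at most $\epsilon b = \epsilon|B|$.
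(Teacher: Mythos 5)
Your setup and the closing clauses are fine, and you correctly identify the two quantitative ingredients (a failed search explores a complete $(k-1)$-ary tree of depth about $i$, and the stopping rule gives $|B_1^{\leq i}| < c_k^{i-1}\epsilon|B|$), which is the same route the paper takes. But the counting step you propose is exactly where the paper's proof does its work, and as stated it would fail. Your credit scheme charges a bad $v_1$ to \emph{matched surplus edges}, claiming it ``forces $c_k^{i-1}$ credits already spent''; this is false in general: the exploration from $v_1$ can fail without ever meeting a matched surplus edge, e.g.\ when every explored $A$-vertex has full degree $k$ and all of its $B$-neighbors are degree-2 vertices inside the layered structure, so the walk simply runs out of depth. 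The right objects to count are the explored $B$-vertices themselves: if $v_1$ stays unmatched, its process has tried $k-1+(k-1)^2+\cdots+(k-1)^{i-1}$ vertices, all lying in $B_1^{\leq i}$. Relatedly, your proposed ``deficiency'' consisting of vertices of $B_1^1$ falling outside $B_1^{\leq i}$ is empty, since $B_1^1\subseteq B_1^{\leq i}$ by definition; the only thing the stopping rule supplies is the size bound on $B_1^{\leq i}$.

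Second, and more importantly, to divide $|B_1^{\leq i}| < c_k^{i-1}\epsilon|B|$ by the per-vertex tree size you need that the sets of vertices explored by two distinct unmatched vertices of $B_1^1$ are \emph{disjoint}; your sketch never addresses this, and without it the charging double-counts and yields no bound at all. The paper establishes disjointness by observing that if the explorations of $v$ and $v'$ met, one could either produce a matching edge for one of them or exhibit a local improvement of size $O(k/\epsilon)$, contradicting the fact that all such constant-size improvements were exhausted before this stage. With disjointness in hand, $n_{um}\bigl(1+\frac{(k-1)^i-(k-1)}{k-2}\bigr)\leq |B_1^{\leq i}|< c_k^{i-1}\epsilon|B|$ together with $c_k=k-1$ gives $n_{um}\leq \epsilon|B|$. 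So your plan needs both the corrected charging target and the disjointness argument before it becomes a proof.
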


\begin{proof}

It is sufficient to prove that at most $\epsilon |B|$ vertices in $B_1^1$ are unmatched. Let $v$ be an unmatched vertex in $B_1^1$. The neighbor of $v$ in $A$, $u$ has no null edges and thus has degree $k$, and none of the neighbors of $u$ have an unmatched surplus edge. The matching process runs in exactly $i-1=1+\log_{c_k}\frac{1}{\epsilon}$ iterations for $v$ and has tried $k-1+(k-1)^2+\cdots +(k-1)^{i-1}= \frac{(k-1)^i-(k-1)}{k-2}$ vertices in $B_1^{\leq i}$. Notice that for two different vertices $v,v'\in B_1^1$ which are marked unmatched, the set of vertices the matching process has tried in $B_1^{\leq i}$ must be disjoint. Otherwise, we can either find a matching for one of them, or there exists a local improvement of size $O(\frac{k}{\epsilon})$. Suppose there are $n_{um}$ unmatched vertices in $B_1^1$. Recall that $|B_1^{\leq i}|\leq c_k^{i-1}\epsilon |B|$. Therefore, $n_{um}\dot (1+\frac{(k-1)^i-(k-1)}{k-2})\leq |B_1^{\leq i}|\leq c_k^{i-1}\epsilon |B|$. We have $n_{um}\leq \epsilon |B|$, where $c_k= k-1$. $\square$
\end{proof}

%If we view the matching process reversely, we can formalize Lemma \ref{lemma:matchb1} from the perspective of tail changes. Consider a vertex $v^*\in B_1^1$ which is matched to a surplus edge $e=(u,v)$ after trying the set of vertices $V'\subseteq B$. Let $U_1$ be the neighbors of $V'$ in $A_1$ and $V_1$ be the neighbors of $U_1$ in $B_1^1$. We have a special tail change $T^*_e(U,V,j)$ associated with $e$, where $V=V'\cup V_1$, $U$ is the set of neighbors of $V$, and $j\leq 2+\log_{c_k}\frac{1}{\epsilon}$.

Consider any surplus edge $e=(u,v)$ matching to $w\in B_1^1$, we can obtain a tail change $T_e(U,V)$ associated with $e$ by viewing the matching process reversely. Assume $u\in A_1^i$, for $i<2+\log_{c_k}\frac{1}{\epsilon}$. Let $U_i=\{u\}$ and $V_i$ be the neighbor of $U_i$ in $B_1^i$. In general, let $V_j$ be the set of neighbors of $U_j$ in $B_1^j$, we define $U_{j-1}$ to be the neighbors of $V_j$ excluding $U_j$. This decomposition ends when $j=1$ and $V_1\subseteq B_1^1$. As $w$ is matched to $e$, we know that $w\in V_1$. Let $U=\cup_{j=1}^iU_j,V=\cup_{j=1}^i V_j$, then a swapping of $U$ with $V$ is a tail change associated with edge $e$. First, we have $|V_j|=|U_j|$ for $1\leq j\leq i$, otherwise there exists a $O(\frac{k}{\epsilon})$-improvement. Hence $|U|=|V|$. Secondly, the set of neighbors of $V$ is $U$ by the construction. And $u$ is the only neighbor of $v$ in $U$, otherwise $w$ will be matched to another surplus edge. As an example in Figure 2, $U=(u_6,u_5,u_3,u_2),V=(v_6,v_5,v_3,v_2)$ is a tail change associated with edge $(u_6,v_8)$ which is matched to $v_3$.

We estimate the size of such a tail change. Since every vertex in $U$ has at most $k$ neighbors, there are at most $\sum_{j=1}^{i} (k-1)^{j-1}=\frac{(k-1)^i-1}{k-2}$ vertices in $V$. Let $i=2+\log_{k-1}\frac{1}{\epsilon}$. Then the size of this tail change is at most $\frac{2(k-1)}{\epsilon}$.

\begin{comment}
If we view the matching process reversely, we can decompose a tail change $T_e(U,V)$ with $e=(v,u)$, $U\subseteq A, V\subseteq B$ as follows. The decomposition is similar as in Section 2.2.
We assume $V_1=V\cap B_1^1$ is not empty. Let $U_1$ be the neighbors of $V_1$. In general, let $V_j$ be the set of vertices in $V$ with degree 1 to $A\setminus \cup_{l=1}^{j-1}U_l$, and let $U_j$ be the neighbors of $V_j$ excluding those in $\cup_{l=1}^{j-1}U_l$. We call a tail change a {\it special tail change} if it has the following properties: (1) $V_1\neq\emptyset$. (2) Every vertex in $U_j$ has a neighbor in $V_{j+1}$. (3) There exists some integer $i\leq 1+\log_{c_k}\frac{1}{\epsilon}$, such that the decomposition ends at $U_i$, and $u$ is the only vertex in $U_i$. (4) $V=\cup_{l=1}^{i}V_l$, $U=\cup_{l=1}^{i}U_l$. (5) $v\in V_m$ for $i<m\leq 2+\log_{c_k}\frac{1}{\epsilon}$ with $u$ being its only neighbor in $U$. We denote a special tail change by $T_e^*(U,V,i)$.
\end{comment}

Let $\mathcal{T}_S$ be the collection of tail changes associated with the surplus edges which are not null edges as defined above. Assume those tail changes together replace $U\subseteq A$ with $V\subseteq B$. Let $B_L=B_1^1\cap V$. Let $B_N\subseteq B_1^1$ be the set of vertices which are matched to null edges. By Lemma \ref{lemma:matchb1}, we know that $|B_1^1\setminus (B_L\cup B_N)|\leq \epsilon |B|$. Moreover, we show in the following Corollary that a consistent collection of tail changes with the same property can be extracted from $\mathcal{T}_S$.

%Notice that by definition of the matching process, for any $T_{e_1},T_{e_2}\in\{T_e\}$, either they are consistent, or one of them is included in the other. Therefore, $\{T_e\}$ is a partially ordered sets by inclusion. We consider all maximal elements of $\{T_e\}$.

\begin{corollary}
\label{corofmatchprocess}
There exists a subcollection $\mathcal{T}_c$ of consistent tail changes from $\mathcal{T}_S$, which together replace a set $U_c\subseteq A$ with $V_c\subseteq B$, such that $V_c\cap B_1^1=B_L$.
\end{corollary}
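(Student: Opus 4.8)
The plan is to show that the sets $\{V_e : T_e\in\mathcal{T}_S\}$ form a laminar family, so that the tail changes whose $V$-sets are inclusion-maximal are automatically pairwise consistent and still reach every vertex of $B_L$.

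First I would record the structure of the members of $\mathcal{T}_S$ that comes from the fact that Algorithm LI has already exhausted all $O(k/\epsilon)$-improvements. This gives $|N_A(W)|\ge|W|$ for every $W\subseteq B$ of size $O(k/\epsilon)$, and hence each $T_e(U_e,V_e)$ is the swap read off a tree rooted at the $A$-endpoint $u$ of $e$, alternating between $A$- and $B$-layers, whose $B$-leaves all lie in $B_1^1$, with $|U_e|=|V_e|$, $N_A(V_e)=U_e$, and $u$ the unique neighbour of the head $v$ of $e$ in $U_e$; moreover the vertex $w_e\in B_1^1$ that the matching process paired with $e$ lies in $V_e$. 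The same ``no $O(k/\epsilon)$-improvement'' fact shows that if $T_e,T_{e'}\in\mathcal{T}_S$ have $V_e\cap V_{e'}=\emptyset$ then also $U_e\cap U_{e'}=\emptyset$, since otherwise the swap of $N_A(V_e\cup V_{e'})=U_e\cup U_{e'}$ by $V_e\cup V_{e'}$ would strictly increase the packing and have size $O(k/\epsilon)$.

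Next I would establish the nesting property: for $T_e,T_{e'}\in\mathcal{T}_S$, either $V_e\cap V_{e'}=\emptyset$, or one of $V_e,V_{e'}$ contains the other. For this one picks $x\in V_e\cap V_{e'}$ and traces $x$ upward through the two trees; since the matching process fixes a route from each $B_1^1$-leaf up to its surplus edge and (no small improvement) the extracted subgraphs are genuine trees with all leaves in $B_1^1$, the two roots $u,u'$ are forced onto a single upward chain out of $x$, so the tree hanging below the lower root is contained in the other, giving nestedness of the $V$-sets. The same argument rules out the other way two tail changes can clash: if the head $v$ of $T_e$ lies in $V_{e'}$, then $u\in N_A(v)\subseteq N_A(V_{e'})=U_{e'}$, and because $u$ is the root of $T_e$ this forces $V_e\subseteq V_{e'}$. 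Now let $\mathcal{T}_c$ contain one tail change for each inclusion-maximal member of $\{V_e:T_e\in\mathcal{T}_S\}$, breaking ties arbitrarily. By nesting the $V$-sets in $\mathcal{T}_c$ are pairwise disjoint, hence so are the $U$-sets; two tail changes with distinct heads then have disjoint $\{v\}\cup V$ (using also that no head lies in another's $V$-set), two with a common head use distinct edges and disjoint $V$'s, and each head $v$ occurs at most $\deg(v)-2$ times because the matching process stops taking surplus edges at $v$ once its degree would drop to $2$; so applying all of $\mathcal{T}_c$ keeps every head of degree $\ge2$, and $\mathcal{T}_c$ is consistent. For coverage, $V_c=\bigcup_{T_e\in\mathcal{T}_c}V_e\subseteq V$ gives $V_c\cap B_1^1\subseteq B_L$, and by nestedness every $V_e$ with $T_e\in\mathcal{T}_S$ sits inside a maximal one, so $V=\bigcup_{T_e\in\mathcal{T}_S}V_e=V_c$ and hence $V_c\cap B_1^1=V\cap B_1^1=B_L$.

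The hard part is the nesting property: making the ``trace $x$ upward'' step rigorous requires pinning down precisely how the matching process determines each tail-change tree and invoking the absence of small local improvements at the right points, so that the upward chains are canonical and the two trees agree below the lower of their two roots. Once nesting is in hand, the selection of the maximal tail changes and the coverage argument are routine; the remaining subtlety is the head-degree bookkeeping when many tail changes are applied at once, which relies on the disjointness of the $U$-sets and on the matching process's rule bounding the number of surplus edges used at any vertex (and, if needed, a further pruning of $\mathcal{T}_c$ that by nesting does not reduce the coverage of $B_L$).
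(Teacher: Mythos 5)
Your proposal is correct and follows essentially the same route as the paper: the heart of the matter is that the $V$-sets of the tail changes in $\mathcal{T}_S$ are nested whenever they intersect, after which one keeps only the inclusion-maximal ones (the paper does this as a greedy scan that discards the smaller of two conflicting tail changes), and consistency plus $V_c\cap B_1^1=B_L$ follow exactly as you argue. The only difference is in how nesting is justified: the paper gets it directly from the matching process's bookkeeping --- when the walk for $w_i$ reaches the head $v_j$ of the other surplus edge, $v_j$ is already marked matched, so the walk continues through it and the entire tree of $T_{e_j}$ is absorbed, giving $V_j\subseteq V_i$ --- whereas you reconstruct it by tracing the layered trees upward, which is the same fact established with a bit more structural effort.
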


\begin{proof}
We consider the tail changes associated with the surplus edges one by one. $\mathcal{T}_c$ is initialized to be empty. If the tail change $T_{e_i}(U_i,V_i)$ is consistent with every tail change in $\mathcal{T}_c$, we include it in $\mathcal{T}_c$. If there exists any tail change $T_{e_j}(U_j,V_j)$ such that $V_i\cap V_j \neq \emptyset$, assume $e_j=(u_j,v_j)$ is matched with the vertex $w_j$ and $e_i=(u_i,v_i)$ with $w_i$, we know that at the time the matching for $w_i$ tries the edges of $v_j$, $v_j$ has been marked as matched. Hence, $V_j\subseteq V_i$. We discard $T_{e_j}(U_j,V_j)$ and include $T_{e_j}(U_j,V_j)$ in $\mathcal{T}_c$. $\square$

\end{proof}

\begin{theorem}
\label{thm:ls3}
For any $\epsilon>0$, Algorithm LI has an approximation ratio $\frac{k+1}{3}+\epsilon$.
\end{theorem}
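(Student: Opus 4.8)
The plan is to bound the ratio $|B|/|A|$ by $\frac{k+1}{3}+\epsilon$ for an optimal packing $B$, by carefully partitioning the vertices of $B$ according to their degree in $G(A,B)$ and charging them against vertices of $A$ together with ``surplus'' capacity. First I would set up the accounting. Classify each vertex of $B$ by degree: degree-$1$ vertices, degree-$2$ vertices, and degree-$\ge 3$ vertices (further split by whether, after all the tail changes of Corollary~\ref{corofmatchprocess}, their degree drops to $2$). The key structural input is that after the preliminary $O(k/\epsilon)$-improvements, every $B'\subseteq B$ of size $O(k/\epsilon)$ has $|N(B')|\ge|B'|$, and—crucially—that there is \emph{no canonical improvement with tail changes} of size $O(\log n)$. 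I would apply the matching of Lemma~\ref{lemma:matchb1} and Corollary~\ref{corofmatchprocess} to obtain a consistent family $\mathcal{T}_c$ of tail changes replacing $U_c\subseteq A$ by $V_c\subseteq B$ with $V_c\cap B_1^1=B_L$, so that $|B_1^1\setminus(B_L\cup B_N)|\le\epsilon|B|$ and each $B_N$-vertex is charged to a distinct null edge (i.e.\ to genuine slack in $A$).

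Next I would pass to the auxiliary multigraph $G_A$ on vertex set $A$ after performing the swap $U_c\to V_c$: every remaining vertex of $B$ now has degree $1$ or $2$ in (the updated) $G(A,\cdot)$ except for a set of $B_L$-vertices used up by the tail changes and the $\le\epsilon|B|$ exceptional vertices. A degree-$2$ vertex becomes an edge of $G_A$, a degree-$1$ vertex a half-edge/pendant. The hypothesis that none of the six configurations in Figure~1 (the three binoculars of \cite{ksetpacking2013} plus a path, a path-plus-cycle, and a cycle) occurs in $G_A$ forces $G_A$ to be very sparse: essentially a forest of ``stars of cycles'' is impossible, so in fact each connected component of $G_A$ can contain at most one cycle and the pendant edges are limited; a standard counting (as in \cite{ksetpacking2013,Halldorsson1995}) then shows the number of edges plus pendant edges of $G_A$—which is roughly the number of degree-$\le 2$ vertices of $B$ not consumed by tail changes—is at most about $\frac{1}{2}|A|$ up to lower-order terms. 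Combining: letting $b_{\ge3}$ be the number of degree-$\ge3$ vertices of $B$, each contributes $\ge 3$ to the edge count $\sum_{v\in B}\deg(v)\le\sum_{u\in A}\deg_k(u)= k|A|$, while the tail changes convert excess degree into the bookkeeping above; one derives $|B|\le\frac{k+1}{3}|A|+O(\epsilon|B|)$, which rearranges to the claimed ratio.

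The main obstacle—and where the real work lies—is the combinatorial case analysis showing that the absence of the six Figure~1 subgraphs in $G_A$ (\emph{after} the tail-change swap) yields the right density bound, and that the tail changes can always be carried out \emph{consistently} without blowing up their total size beyond $O(\log n)$ or beyond the per-tail-change cap $\frac{2(k-1)}{\epsilon}$. One must argue that if too many degree-$\ge 3$ vertices of $B$ failed to have their degree reduced to $2$, then either a short local improvement of size $O(k/\epsilon)$ would exist (contradiction) or a canonical improvement with tail changes of logarithmic size would be assembled from a cycle/path in $G_A$ plus the attached tail changes (contradiction). Quantitatively one has to track the $\epsilon$-losses from Lemma~\ref{lemma:matchb1}, from truncating tail changes at height $2+\log_{k-1}\frac1\epsilon$, and from the boundary effects of the connected-component counting, and check they sum to at most $\epsilon|B|$ (possibly after rescaling $\epsilon$ by a constant depending on $k$).

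Concretely, I would organize the proof as: (i) invoke the no-small-improvement property to get $|N(B')|\ge|B'|$; (ii) run the matching process and cite Lemma~\ref{lemma:matchb1} and Corollary~\ref{corofmatchprocess} to fix $\mathcal{T}_c,\,U_c,\,V_c,\,B_L,\,B_N$; (iii) perform the swap and define $G_A$ on the residual instance; (iv) prove that no Figure~1 configuration in $G_A$, combined with no $O(k/\epsilon)$-improvement, implies $G_A$ has at most $\frac{|A|}{2}+o(|A|)$ edges-plus-pendants and that at most $\epsilon|B|$ degree-$\ge 3$ vertices of $B$ remain with degree $>2$; (v) write the degree-sum inequality $3b_{\ge3}+2b_2+b_1\le\sum_{u}\deg_k(u)$, substitute $b_2+b_1\le$ (edge+pendant count of $G_A$) plus the tail-change contributions, and solve for $|B|/|A|$; (vi) collect all error terms and conclude $|B|/|A|\le\frac{k+1}{3}+\epsilon$. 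I expect step (iv) to be the crux, essentially a refinement of the binocular-free analysis of Sviridenko and Ward to the enlarged family of six forbidden shapes and to the tail-change-augmented graph.
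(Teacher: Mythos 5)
Your overall architecture is the paper's: run the $O(k/\epsilon)$-improvements, build the matching of Lemma~\ref{lemma:matchb1} and extract the consistent family $\mathcal{T}_c$ of Corollary~\ref{corofmatchprocess}, perform the equal-size swap, pass to an auxiliary multigraph, and combine a density bound with the degree-sum inequality. But the two quantitative claims you put at the crux (your step (iv)) are wrong. First, absence of the six configurations does \emph{not} force $G_A$ to have at most $\frac{|A|}{2}+o(|A|)$ edges-plus-pendants. A bare path or cycle in $G_A$ is only a $p$-for-$p$ swap, hence not an improvement unless tail changes are attached; consequently a disjoint union of cycles covering $A$ (all $B$-vertices of degree 2) is locally optimal and has $|A|$ auxiliary edges --- this is essentially the tight $\frac{k+1}{3}$ configuration, so no density bound better than $1$ can follow from local optimality. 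What the paper actually extracts is the bound $b_n^1+b_2^1+b_2^2+b_3^2\le(1+\epsilon')(a-a_t)$, obtained by applying Lemma~\ref{lemmacycle} to the auxiliary multigraph built only from the null-edge-matched vertices, the degree-2 vertices, and the degree-$\ge3$ vertices reducible to degree 2 by tail changes in $\mathcal{T}_c$, and then using Claim~\ref{claimimprovement} to turn a binocular there into a canonical improvement with tail changes of size $<\frac{4}{\epsilon}\log n$, contradicting termination. Moreover, with this (correct) density bound your step (v), which plugs ``$b_1+b_2\le$ edge count of $G_A$'' into the degree sum, only yields $\frac{k+2}{3}$: the improvement to $\frac{k+1}{3}$ comes precisely from the quantitative bookkeeping you leave implicit --- up to $\epsilon'|B|$ exceptions, every degree-1 vertex is either matched to a null edge (so it effectively costs 2 on the $A$-side edge count) or lies in $B_t$ and is removed by the equal-size swap $A_t\leftrightarrow B_t$, and the removed $A_t$-vertices can send at most $(k-2)a_t$ edges back to the residual graph, which absorbs the classes $B_2^1,B_3^{2'},B_3^1$.

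Second, your claim that all but $\epsilon|B|$ of the degree-$\ge3$ vertices must have their degree reduced to 2, else a short or logarithmic improvement exists, is false: in the tight instance $(k-2)n$ optimal sets have degree 3 and no improvement of any kind exists. It is also unnecessary. The paper never reduces these vertices; the classes that remain of degree $\ge 3$ (its $B_3^3$, $B_3^{2'}$, $B_3^1$) are handled purely by counting, since each contributes at least as many edges to $k(a-a_t)$ as the weight it needs in the final inequality. So the missing ideas are (a) the correct form of the density consequence of local optimality, i.e.\ inequality $(1+\epsilon')(a-a_t)\ge b_n^1+b_2^1+b_2^2+b_3^2$ together with the translation argument of Claim~\ref{claimimprovement}, and (b) the explicit class-by-class accounting that eliminates the degree-1 vertices and charges $A_t$'s residual edges, rather than a structural sparsity claim about $G_A$ that does not hold.
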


Before proving the theorem, we state the following result from \cite{ksetpacking2013} which is derived from a lemma in \cite{indsetbounddeg}. The lemma in \cite{indsetbounddeg} states that when the density of a graph is greater than a constant $c>1$, there exists a subgraph of size $O(\log n)$ with more edges than vertices. If the underlying graph is the auxiliary multi-graph $G_A$ defined in Section 2.1 and this condition holds, we know from \cite{ksetpacking2013} that there exists a binocular of size $O(\log n)$.

\begin{lemma}[\cite{ksetpacking2013}]
\label{lemmacycle}
For any integer $s\geq 1$ and any undirected multigraph $G=(V,E)$ with $|E|\geq \frac{s+1}{s}|V|$, there exists a binocular of size at most $4s\log n-1$.
\end{lemma}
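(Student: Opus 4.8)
The plan is to split the statement into a structural step and a quantitative step. Structurally, it suffices to produce a connected subgraph $H$ of $G$ with $|V(H)|=O(s\log n)$ and strictly more edges than vertices: such an $H$ contains a binocular of comparable size. Indeed, fix a spanning tree of $H$; the at least two non‑tree edges $f_1,f_2$ each close a fundamental cycle $C_1,C_2$, and the union of $C_1$, $C_2$ and the (possibly empty) tree path joining them is a theta graph, a figure‑eight, or a dumbbell according to whether $C_1$ and $C_2$ share an arc, a single vertex, or nothing — in every case a binocular, and it is contained in (spanning tree)$\,\cup\,\{f_1,f_2\}$, so it has at most $|V(H)|$ vertices and $|V(H)|+1$ edges. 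Thus the whole burden is the quantitative claim that a multigraph with $|E|\ge\frac{s+1}{s}|V|$ admits such an $H$ of size at most roughly $4s\log n$; this is the lemma of \cite{indsetbounddeg} specialised to the density threshold used here, and \cite{ksetpacking2013} is what packages its ``more edges than vertices'' subgraph as a binocular.

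For the quantitative claim I would first normalise $G$: pass to a connected component with $|E|\ge\frac{s+1}{s}|V|$ (one exists by averaging), then repeatedly delete vertices of degree at most $1$. Deleting such a vertex removes one vertex and at most one edge, so it changes $|E|-\frac{s+1}{s}|V|$ by at least $\frac{s+1}{s}-1=\frac1s>0$; hence this quantity stays nonnegative and the process terminates at a nonempty connected graph $G'$ with $\delta(G')\ge 2$ and $|E'|\ge\frac{s+1}{s}|V'|$. Equivalently, the cycle rank $r:=|E'|-|V'|+1$ satisfies $r\ge\frac1s|V'|+1\ge 2$, and the total excess degree $\sum_{v}(\deg v-2)=2(|E'|-|V'|)$ is at least $\frac2s|V'|$.

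The heart is a breadth‑first layering argument on $G'$. From an arbitrary root, let $L_0,L_1,\dots$ be the BFS levels, $V_t=L_0\cup\cdots\cup L_t$ the ball of radius $t$, and $\phi_t=e(V_t)-|V_t|$, where $e(V_t)$ is the number of edges with both endpoints in $V_t$. Since every vertex of $L_{t+1}$ receives at least one edge from $L_t$, one checks $\phi_{t+1}\ge\phi_t$; moreover $\phi_0\ge -1$ and $\phi_\infty=|E'|-|V'|\ge 1$. Let $\tau$ be the first radius with $\phi_\tau\ge 1$; then $G'[V_\tau]$ is connected with more edges than vertices, so it remains only to bound $|V_\tau|$. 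Because $\phi$ is monotone, integer‑valued, and $\le 0$ on $[0,\tau)$, it rises by a total of at most $1$ there; and on any radius step where $\phi$ does not increase, each new vertex $v\in L_{t+1}$ must have exactly one edge up to $L_t$ and none inside $L_{t+1}$, hence $\deg(v)-1$ edges down, so along a maximal such ``flat'' run the level sizes obey $|L_{j+1}|=|L_j|+\sum_{v\in L_j}(\deg v-2)$ — they grow by the excess degree accumulated so far. Feeding in the global excess‑degree budget $\ge\frac2s|V'|$ forces the ball to reach $\phi\ge1$ after $O(s\log|V'|)$ radius steps with only $O(s\log|V'|)$ vertices; optimising the constants gives $|V_\tau|\le 4s\log n-1$, and the binocular extracted in the first paragraph then has size at most $4s\log n-1$.

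The main obstacle is precisely this last bookkeeping — converting ``density $\ge\frac{s+1}{s}$'' into $|V_\tau|=O(s\log n)$ with the clean constant $4s$. The delicate case is a long, nearly $2$‑regular region over which $\phi$ is flat and the ball grows slowly; one must argue that such a region carries essentially none of the $\frac2s|V'|$ units of excess degree and, being a disjoint union of downward BFS paths whose interior vertices cannot be endpoints, cannot persist to the bottom of a finite connected graph, so it can occur neither long nor entirely before $\tau$. Quantifying this trade‑off — which is where the factor $s$ enters — is exactly the content of the lemma we import from \cite{indsetbounddeg}.
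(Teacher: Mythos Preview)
The paper does not prove Lemma~\ref{lemmacycle}: it is quoted from \cite{ksetpacking2013}, with the remark just before the statement that it is ``derived from a lemma in \cite{indsetbounddeg}'' asserting that a graph of density $>c>1$ contains a subgraph of size $O(\log n)$ with more edges than vertices. No argument is given in the paper beyond that citation.

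Your proposal therefore goes further than the paper. The structural step---extracting a binocular from any connected subgraph with strictly more edges than vertices via two fundamental cycles of a spanning tree---is correct and is exactly how \cite{ksetpacking2013} packages the \cite{indsetbounddeg} lemma. Your preprocessing (pass to a dense component, strip degree-$1$ vertices) is also fine. So at the level at which the paper operates, your write-up is consistent with it and in fact more explicit.

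Where your sketch has a genuine gap is the BFS part, and you flag it yourself. The step ``the global excess-degree budget $\sum_v(\deg v-2)\ge\frac{2}{s}|V'|$ forces the ball to reach $\phi\ge 1$ within $O(s\log|V'|)$ vertices'' does not follow from what you have written: excess degree is a global quantity and may be concentrated far from the chosen root, so a single BFS from an arbitrary vertex need not see it early. The recursion $|L_{j+1}|=|L_j|+\sum_{v\in L_j}(\deg v-2)$ on a flat run is correct, but in a nearly $2$-regular region this gives no growth at all, and nothing in the sketch prevents such a region from having length $\Theta(|V'|)$ before the ball reaches any vertex of degree $\ge 3$. The actual argument in \cite{indsetbounddeg} gets around this not by analysing one BFS ball but by a pigeonhole over starting points (equivalently, by first passing to the topological minor where degree-$2$ chains are contracted, obtaining minimum degree $\ge 3$, and then running the growth argument there). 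Since you ultimately defer this step to \cite{indsetbounddeg}, your proposal is not wrong, but the intermediate BFS narrative as stated would not produce the constant $4s$ on its own.
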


\begin{proof}[Theorem \ref{thm:ls3}]
For a given $\epsilon$, let $\epsilon' = \frac{2k+5}{3}\epsilon>3\epsilon$. Let $A$ be the packing returned by Algorithm LI with parameter $\epsilon$ and for any other packing $B$, we show that $(\frac{k+1}{3}+\epsilon)|A|\geq |B|$.
In the following, we use the corresponding small letter of a capital letter (which represents a set) to represent the size of this set.

In Corollary \ref{corofmatchprocess}, the collection of consistent tail changes $\mathcal{T}_c$ together replace a set of $a_t$ vertices $A_t$ in $A$ with $b_t=a_t$ vertices $B_t$ in $B$. We exclude theses vertices from the original bipartite conflict graph $G(A, B)$. Denote the remaining graph by $G(A',B')$. We add null edges to vertices in $A'$ until every vertex in $A'$ has degree $k$. There are $ka'$ edges counting from $A'$.

Let $B'_N= B_N\cap B'$ and $b^1_n=|B_N'|$. We can also think of that there is a null edge at each vertex in $B_N'$ when we count the number of edges from $B'$. By Lemma \ref{lemma:matchb1}, there are at most $\epsilon' b$ unmatched vertices in $B_1^1$. We further partition the vertices in $B'$ as follows. Let $B_3^2$ ($B_3^{2'}$) be the set of vertices in $B'$ whose degree drops to 2 after performing the tail changes in $\mathcal{T}_c$, and for any vertex $v\in B_3^2$ ($v'\in B_3^{2'}$), there is at least one (no) tail change in $\mathcal{T}_c$ associated with $v$.
Let $B^2_2$ be the set of vertices in $B'$ with degree 2 in $G(A',B')$ and no neighbors in $A_t$. Let $B_2^1$ ($B_3^1$) be the set of vertices in $B'$ whose degree drops from 2 (at least 3) in $G(A,B)$ to 1 in $G(A',B')$. Let $B_3^3$ be the set of vertices in $B'$ with degree at least 3 in $G(A',B')$.
Moreover, there is no vertex in $G(A',B')$ of degree 0, otherwise, there exists a local improvement.

By Lemma \ref{lemma:matchb1}, the number of edges in $G(A', B')$ is at least $2b^1_n+\epsilon' b+b_2^1+2b^2_2+2b_3^2+3b_3^3+2b_3^{2'}+b_3^1$. Therefore,
\begin{equation}
\label{relationab1}
k(a-a_t) \geq 2b^1_n + \epsilon' b+b_2^1+2b^2_2+2b_3^2+3b_3^3+2b_3^{2'}+b_3^1.
\end{equation}

Next, we show that $b_n^1+b_2^1+b^2_2+b_3^2\leq (1+\epsilon')(a-a_t)$. Suppose the set of neighbors of $B_N',B^2_2,B_2^1,B_3^2$ in $A'$ is $A_2$. Construct an auxiliary multi-graph $G_{A_2}$ as in Section 2.1, where the vertices are $A_2$, every vertex in $B_N',B_2^1$ creates a self-loop, and every vertex in $B_2^2,B_3^2$ creates an edge. Assume $G_{A_2}$ has at least $(1+\epsilon')|A_2|$ edges, implied by Lemma \ref{lemmacycle}, there exists a binocular of size at most $\frac{4}{\epsilon'}\log |A_2|-1$ in $G_{A_2}$.

Let $G_A$ be the auxiliary multi-graph with vertices being $A$, every degree-1 vertex in $B$ creates a self-loop, every degree-2 vertex in $B$ creates an edge, and every vertex $v$ with degree dropping to 2 by performing some consistent tail changes of this vertex in $\mathcal{T}_c$ creates an edge between the two neighbors $u_1,u_2$ of $v$, where $(u_1,v),(u_2,v)$ are not associated with any tail change in $\mathcal{T}_c$. (Notice that contrary to the auxiliary multi-graph considered in \cite{ksetpacking2013}, here some vertices in $B$ might simultaneously create an edge in $G_A$ and involve in tail changes.) We have the following claim for sufficiently large $n$
($n>(\frac{k}{\epsilon})^{O(\epsilon)}$).

\begin{Claim}
\label{claimimprovement}
If there is a binocular of size $p\leq \frac{4}{\epsilon'}\log |A_2|-1$ in $G_{A_2}$, there exists a canonical improvement with tail changes in $G_A$ of size at most $\frac{12}{\epsilon'}\log n$.
\end{Claim}

Implied by the claim, we know that there exists a canonical improvement with tail changes in $G_A$ of size $\frac{12}{\epsilon'}\log n<\frac{4}{\epsilon}\log n$, which can be found by Algorithm LI. Therefore
\begin{equation}
\label{relationab2}
(1+\epsilon')(a-a_t)\geq (1+\epsilon')|A_2| \geq b_n^1+b_2^1+b^2_2+b_3^2.
\end{equation}

Combining (\ref{relationab1}) and (\ref{relationab2}), we have
\begin{eqnarray}
(k+1+\epsilon')(a-a_t) &\geq& 3b_n^1+\epsilon' b+2b_2^1+3b^2_2+3b_3^2+3b_3^3+2b_3^{2'}+b_3^1  \nonumber \\
&=& 3(b-b_t-\epsilon' b)-b_2^1-b_3^{2'}-2b_3^1+\epsilon' b.
\end{eqnarray}

Hence, $(3-2\epsilon')b \leq (k+1+\epsilon')a-(k-2+\epsilon')a_t+b_2^1+b_3^{2'}+2b_3^1$.

Since every vertex in $A_t$ can have at most $k-2$ edges to $B'$, we have $b_2^1+b_3^{2'}+2b_3^1 \leq (k-2)a_t$. Therefore, $(3-2\epsilon')b \leq (k+1+\epsilon')a$. As $\epsilon' =\frac{2k+5}{3}\epsilon$, we have $b\leq (\frac{k+1}{3}+\epsilon)a$. $\square$

\end{proof}

The proof of Claim \ref{claimimprovement} helps understand why we consider three more types of local improvements in addition to binoculars and helps explain the algorithm design in the next section.

\begin{comment}
\begin{proof}[Claim \ref{claimimprovement}]
Consider any binocular $I$ in $G_{A_2}$. If there is no edge in $I$ which is from $B_2^1$, we have a corresponding improvement $I'$ in $G_A$ by performing tail changes for any edge from $B_3^2$ in $I$.
Otherwise, we assume that there is one self-loop in $I$ from $v\in B_2^1$. By definition, one neighbor $u_1$ of $v$ lies in $A_t$ and the other neighbor $u_2$ in $A'$. Suppose $u_1$ belongs to a tail change in $\mathcal{T}_c$ which is associated with $w\in B_3^2$. If $w\in I$, we associate $w$ with tail changes in $G_A$. In $G_A$, we remove the self-loop on $u_2$ and add edge $(u_1,u_2)$. In this way, we have a path together with the other cycle in $I$ which form an improvement in $G_A$, assuming the other cycle in $I$ is not a self-loop from $B_2^1$.
If the other cycle in $I$ is also a self-loop from $v'\in B_2^1$, let $u_1'$ be a neighbor of $v'$ in $A_t$ and $u_2'$ be the other neighbor of $v'$ in $A'$. If $u_1'$ belongs to the tail change associated with $w'\in B_3^2$ and $w'\in I$, the path between $u_2,u_2'$ in $I$ together with the edges $(u_1,u_2),(u_1',u_2')$ form an improvement. If $u_1=u_1'$, we have an improvement in $G_A$ as a cycle. Other cases can be analyzed similarly. $\square$
\end{proof}
\end{comment}

\begin{proof}[Claim \ref{claimimprovement}]
There are two differences between $G_{A_2}$ and $G_A$. First, there is no tail changes involved in any improvement from $G_{A_2}$. While in $G_A$, if we want to include edges which come from vertices of degree at least 3 in $B$ into an improvement, we also need to perform tail changes. Second, any vertex in $B_2^1$ creates a self-loop in $G_{A_2}$, while in $G_A$ it creates an edge.

Consider any binocular $I$ in $G_{A_2}$ which forms a pattern in the first row of Figure 1. If there is no edge in $I$ which is from $B_2^1$, we have a corresponding improvement $I'$ in $G_A$ by performing tail changes for any edge from $B_3^2$ in $I$. As we select a consistent collection of tail changes $\mathcal{T}_c$, $I'$ is a valid improvement.

Otherwise, we first assume that there is one self-loop in $I$ from $v\in B_2^1$. By definition, one neighbor $u_1$ of $v$ lies in $A_t$ and the other $u_2$ in $A'$. Suppose $u_1$ belongs to a tail change in $\mathcal{T}_c$ which is associated with $w\in B_3^2$. By the matching process and Corollary \ref{corofmatchprocess}, there exists a path from $u_1$ to a vertex $b_1\in B_L^1$, where the edges in this path might come from degree-2 vertices, or from higher degree vertices with consistent tail changes. Let $a_1$ be the neighbor of $b_1$.

If $w\notin I$, we have a canonical improvement in $G_A$ by replacing the self-loop from $v$ by the path from $u_2,u_1$ to $a_1$ and a self-loop on $a_1$ from $b_1$. The size of the improvement increases by at most $1+\log_{k-1}\frac{1}{\epsilon}$.
If $w\in I$, we associate $w$ with tail changes in $G_A$. We replace the self-loop on $u_2$ from $v$ to the edge $(u_1,u_2)$. In this way, we have a path together with the other cycle in $I$ which form an improvement in $G_A$, assuming the other cycle in $I$ is not a self-loop from $B_2^1$.

Finally, if the other cycle in $I$ is also a self-loop from $v'\in B_2^1$, let $u_1'$ be a neighbor of $v'$ in $A_t$ and $u_2'$ be another neighbor of $v'$ in $A'$. If $u_1'$ belongs to the tail change associated with $w'\in B_3^2$ and $w'\in I$, the path between the two self-loops in $I$ together with the edges $(u_1,u_2),(u_1',u_2')$ form an improvement (as there are tail changes involved). Here $w'$ and $w$ could be the same vertex. Otherwise, we can also replace the self-loop of $v'$ by a path with one endpoint attaching a new self-loop. In this way, there is an improvement in $G_A$ with size increasing by at most $2(1+\log_{k-1}\frac{1}{\epsilon})$. Notice that if the two new self-loops are the same, two new paths and the original path between $v,v'$ form a cycle.

Notice that $1+\log_{k-1}\frac{1}{\epsilon}<\frac{4}{\epsilon'}\log n$ for $n>(\frac{k}{\epsilon})^{O(\epsilon)}$. Therefore, for any binocular of size $p\leq \frac{4}{\epsilon'}\log |A_2|-1$ in $G_{A_2}$, there exists a corresponding canonical improvement in $G_A$ of size at most $\frac{12}{\epsilon'}\log n$.   $\square$

\end{proof}

\section{The Algorithm and Main Results}

In this section, we give an efficient implementation of Algorithm LI using the color coding technique \cite{colorcoding} and dynamic programming. Let $U$ be the universe of elements and $K$ be a collection of $kt$ colors, where $t=\frac{4}{\epsilon}\log n \cdot \frac{2(k-1)}{\epsilon}\cdot (k-2)\leq \frac{4}{\epsilon}\log n \cdot \frac{2k^2}{\epsilon}$. We assign every element in $U$ one color from $K$ uniformly at random. If two $k$-sets contain $2k$ distinct colors, they are recognized as disjoint. Applying color coding is crucial to obtain a polynomial-time algorithm for finding a logarithmic-sized local improvement.

%We then summarize our new polynomial-time local search algorithm with approximation ratio $\frac{k+1}{3}+\epsilon$.

\subsection{Efficiently finding canonical improvements with tail changes}

In this section, we show how to efficiently find canonical improvements with tail changes using the color coding technique. Let $C(S)$ be the set of distinct colors contained in sets in $S$. We say a collection of sets is colorful if every set contains $k$ distinct colors and every two sets contain different colors.

{\bf Tail changes.} We say a tail change $T_e(U,V)$ of a vertex $v$ is {\it colorful} if $V$ is colorful, and the colors in $C(V)$ are distinct from $C(v)$. A surplus edge can be associated with many tail changes. Let $\mathcal{T}_v(e)$ be all colorful tail changes of size at most $\frac{2(k-1)}{\epsilon}$ which are associated with an edge $e$ of $v$. We enumerate all subsets of $\mathcal{S}\setminus \mathcal{A}$ of size at most $\frac{2(k-1)}{\epsilon}$ and check if they are colorful, and if they are tail changes associated with $e$. The time to complete the search for all vertices is at most $n^{O(k/\epsilon)}$.

The next step is to find all colorful groups of tail changes associated with $v$ such that after performing one group of tail changes, the degree of $v$ drops to 2. Notice that the tail changes in a colorful group are consistent. For every two edges $e_i,e_j$ of $v$, we can compute a collection of colorful groups of tail changes which associate with all edges of $v$ except $e_i, e_j$ by comparing all possible combinations of tail changes from $E(v)\setminus \{e_i,e_j\}$. There are at most $(n^{O(k/\epsilon)})^{k-2}$ combinations. For every group of colorful tail changes which together replace $V$ with $U$, we explicitly keep the information of which vertices are in $U,V$ and the colors of $V$. It takes at most $n^{O(k^2/\epsilon)}$ space.
To summarize, the time of finding colorful groups of tail changes for every vertex of degree at least 3 is $n^{O(k^2/\epsilon)}$.

{\bf Canonical improvements with tail changes.} After finding all colorful tail changes for every vertex of degree at least 3, we construct the auxiliary multi-graph $G_{A}$. For vertices $a_1,a_2$ in $G_{A}$, we put an edge $e(a_1,a_2)$ between $a_1$ and $a_2$ if first, there is a set $b\in \mathcal{C}=\mathcal{S}\setminus \mathcal{A}$ intersecting with only $a_1,a_2$, or secondly, there is a set $b\in \mathcal{C}$ of degree $d_b\geq 3$ intersecting with $a_1,a_2$, and for other edges of $b$, there exists at least one group of $d_b-2$ colorful tail changes. In the first case, we assign the colors of $b$ to $e(a_1,a_2)$. In the second case, we add as many as $n^{O(k^2/\epsilon)}$ edges between $a_1$ and $a_2$, and assign to each edge the colors of $b$ together with the colors of the corresponding group of $d_b-2$ tail changes. The number of edges between two vertices in $G_{A}$ is at most $n\cdot n^{O(k^2/\epsilon)}$. The number of colors assigned to each edge is at most $\frac{2k^3}{\epsilon}$ (Notice that the number of colors on an edge is at most $k(1+\frac{(k-1)^2/\epsilon-1}{k-2})(k-2)$. This is at most $\frac{2k^3}{\epsilon}$ for $\epsilon<k+5$, which is usually the case.) Moreover, we add a self-loop for a vertex $a$ in $G_A$ if there exists a set $b\in\mathcal{C}$ such that $b$ intersects only with set $a$ and assign the colors of $b$ to this self-loop.

We use the dynamic programming algorithm to find all colorful paths and cycles of length $p=\frac{4}{\epsilon}\log n$ in $G_{A}$. A path/cycle is colorful if all the edges contain distinct colors. If we did not consider improvements containing at most one cycle, we could use a similar algorithm as in \cite{ksetpacking2013}. In our case, when extending a path by an edge $e$ by dynamic programming, we would like to keep the information of the vertices replaced by the tail changes of $e$ in this path. This would take quasi-polynomial time when backtracking the computation table. By Claim \ref{claimimprovement}, it is sufficient to check for every path with endpoints $u,v$, if there is an edge of this path containing a tail change $T_e(U,V)$, such that $u\in U$ or $v\in U$. We sketch the algorithm as follows. For a given set of colors $C$, let $\mathcal{P}(u, v, j, C, q_u, q_v)$ be an indicator function of whether there exists a path of length $j$ from vertex $u$ to $v$ with the union of the colors of these edges equal $C$. $q_u$($q_v$) are indicator variables of whether there is a tail change $T_e(U,V)$ of some edge in the path such that $u\in U$($v\in U$). The computation table can be initialized as $\mathcal{P}(u, u, 0, \emptyset, 0, 0)=1$ and $\mathcal{P}(u, v, 0, \emptyset, 0, 0)=1$, for every $u,v\in A$. In general, for a fixed set of colors $C$ and integer $j\geq 1$, $\mathcal{P}(u, v, j, C, q_u, q_v)=1$ if there exists a neighbor $w$ of $v$, such that $\mathcal{P}(u, w, j-1, C', q_u, q_w)=1$, $C'\cup C((w,v))= C$ and $C'\cap C((w,v))=\emptyset$. If $C((w,v))>k$ (i.e., there are tail changes), we check every edge between $w$ and $v$ which satisfies the previous conditions. If there exists an edge associated with a tail change $T_e(U,V)$ such that $u,v\in U$, we mark $q_u=1,q_v=1$. Otherwise, if there exists an edge associated with a tail change $T_e(U,V)$ such that $u\in U$, we mark $q_u=1$. To find colorful cycles, we query the result of $\mathcal{P}(u,u,j,C, q_u, q_u)$ for $j\geq 1$. Recall that we use $kt$ many colors, where $t\leq p\cdot \frac{2k^2}{\epsilon}$. The running time of finding all colorful paths and cycles is $O(n^3kp 2^{kt})$, which is $n^{O(k^3/\epsilon^2)}$.

The final step is to find canonical improvements with tail changes by combining colorful paths and cycles to form one of the six types defined in Section 3.1 by enumerating all possibilities. The running time of this step is $n^{O(k^3/\epsilon^2)}$.
In conclusion, the total running time of finding colorful tail changes, colorful paths/cycles, and canonical improvements with tail changes is $n^{O(k^3/\epsilon^2)}$. We call this color coding based algorithm {\bf Algorithm CITC} (canonical improvement with tail changes). The running time analysis of Algorithm CITC is given in the appendix.

\subsection{Main results}

In this section, we present our main results. We first present a randomized local improvement algorithm.
The probability that Algorithm CITC succeeds in finding a canonical improvement with tail changes if one exists can be calculated as follows. The number of sets involved in a canonical improvement with tail changes is at most $\frac{2k^2}{\epsilon}\cdot \frac{4\log n}{\epsilon}$. The probability that an improvement with $i$ sets having all $ki$ elements of distinct color is
\begin{equation}
\label{colorsuccesspr}
\frac{{kt\choose ki}(ki)!}{(kt)^{ki}} = \frac{(kt)!}{(kt-ki)!(kt)^{ki}}\geq \frac{(kt)!}{(kt)^{kt}} >e^{-kt} \geq n^{-8k^3/\epsilon^2}.
\end{equation}

Let $N=n^{8k^3/\epsilon^2}\ln n$. We run Algorithm CITC $2N$ times and each time with a fresh random coloring. From (\ref{colorsuccesspr}), we know that the probability that at least one call of CITC succeeds in finding an improvement is at least 
\begin{equation}
1-(1-n^{-8k^3/\epsilon^2})^{2N} \geq 1-exp(n^{-8k^3/\epsilon^2}\cdot 2n^{-8k^3/\epsilon^2}\ln n) = 1-n^{-2}. \nonumber
\end{equation}

Since there are at most $n$ local improvements for the problem, the probability that all attempts succeed is at least $(1-n^{-2})^n \geq 1-n^{-1}\longrightarrow 1$ as $n\longrightarrow \infty$.
Hence this randomized algorithm has an approximation ratio $\frac{k+1}{3}+\epsilon$ with high probability. We call this algorithm {\bf Algorithm RLI} (R for randomized). The running time of the algorithm is $2N\cdot n^{O(k^3/\epsilon^2)}$, which is $n^{O(k^3/\epsilon^2)}$.

\begin{comment}
We summarize Algorithm RLI3 as follows.
\renewcommand{\thealgorithm}{RLI3}
\begin{algorithm}
\caption{Randomized implementation of canonical improvements with tail changes for the $k$-Set Packing problem}
\label{localsearch}
\begin{algorithmic}[1]
\State {\bf Input:} a collection of $k$-sets $\mathcal{S}$ on universe $U$, parameter $\epsilon>0$.
\State Pick a packing $\mathcal{A}$ greedily.
\Repeat
\State Perform all local improvements of size $O(\frac{k}{\epsilon})$.
\State {\bf Loop} $2n^{8k^3/\epsilon^2}\ln n$ times:
\State - Assign $kt$ colors to $U$ uniformly at random, $t=\frac{2(k-1)(k-2)}{\epsilon}\cdot \frac{4\log n}{\epsilon}$.
\State - Find all colorful groups of tail changes for sets in $\mathcal{S}\setminus\mathcal{A}$ of degree at least 3.
\State - Construct the auxiliary multi-graph and find canonical improvements by dynamic programming.
\State - Update $\mathcal{A}$ if an improvement is found.

\Until{there is no local improvement.}

\State {\bf Return} $\mathcal{A}$.
\end{algorithmic}
\end{algorithm}

\end{comment}

We can obtain a deterministic implementation of Algorithm RLI, which always succeeds in finding a canonical improvement with tail changes if one exists. We call this deterministic algorithm {\bf Algorithm DLI} (D for deterministic). The general approach is given by Alon et al. \cite{derandom}. The idea is to find a collection of colorings $\mathcal{K}$, such that for every improvement there exists a coloring $K\in \mathcal{K}$ that assigns distinct colors to the sets involved in this improvement. Then Algorithm CITC can be implemented on every coloring until an improvement is found. The collection of colorings satisfying this requirement can be constructed using perfect hash functions from $U$ to $K$. A perfect function for a set $S\subseteq U$ is a mapping which is one-to-one on $S$. A $p$-perfect family of hash functions contains one perfect hash function for each set $S\subseteq U$ of size at most $p$. Alon and Naor \cite{derandom} show how to construct a perfect hash function from $[m]$ to $[p]$ in time $O(p\log m)$ explicitly. This function can be described in $O(p+\log p\log\log m)$ bits and can be evaluated in $O(\log m/\log p)$ time.
In our case, we use a $kt$-perfect family of perfect hash functions from $U$ to every $K\in \mathcal{K}$. $\mathcal{S}$ covers at most $nk$ elements. The number of the perfect hash functions in this family is at most $2^{kt+\log kt\log\log nk}$, which is $n^{O(k^3/\epsilon^2)}$. Hence, we need $n^{O(k^3/\epsilon^2)}$ runs of dynamic programming to find an improvement.

\begin{theorem}
For any $\epsilon>0$, Algorithm DLI achieves an approximation ratio $\frac{k+1}{3}+\epsilon$ of the $k$-Set Packing problem in time $n^{O(k^3/\epsilon^2)}$.
\end{theorem}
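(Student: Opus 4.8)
The plan is to obtain the theorem by assembling three facts that have already been set up: the performance guarantee of Algorithm LI (Theorem~\ref{thm:ls3}), the fact that Algorithm CITC finds every \emph{colorful} canonical improvement with tail changes that Algorithm LI would use, and the derandomization of the coloring via an explicit perfect hash family. So the work is to glue these together and track the running time.

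First I would establish the approximation ratio by arguing that Algorithm DLI and Algorithm LI get stuck at the same packings. Algorithm DLI makes only genuine improving exchanges, so it never overshoots; conversely, suppose a canonical improvement with tail changes of size at most $\frac{4}{\epsilon}\log n$ exists at the current packing $\mathcal{A}$. It touches at most $kt$ elements of $U$, so by construction of the $kt$-perfect family $\mathcal{K}$ of Alon and Naor \cite{derandom} some hash function in $\mathcal{K}$ separates these elements; on the induced coloring the improvement becomes colorful, and the dynamic program of Section~4.1 detects it. The only delicate point here is completeness of that dynamic program: it records, for each colorful path in $G_A$, merely whether each of its two endpoints is covered by some tail change, which by Claim~\ref{claimimprovement} is enough to reconstruct one of the six improvement shapes, and which keeps the table polynomial-sized. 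Hence DLI halts exactly at a packing $\mathcal{A}$ admitting no canonical improvement with tail changes, so the hypotheses of Theorem~\ref{thm:ls3} are satisfied and $|\mathcal{B}|\le(\frac{k+1}{3}+\epsilon)|\mathcal{A}|$ for every packing $\mathcal{B}$, in particular an optimal one.

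Second, the running time. Each outer iteration begins by performing all $O(\frac{k}{\epsilon})$-improvements, which costs $n^{O(k/\epsilon)}$, and then runs Algorithm CITC once for each coloring in $\mathcal{K}$. We have $|\mathcal{K}|=2^{kt+\log kt\log\log nk}=n^{O(k^3/\epsilon^2)}$, and one call of CITC (enumerating colorful groups of tail changes, building the auxiliary multigraph $G_A$, running the path/cycle dynamic program with $kt$ colors, and combining the pieces into the six shapes) takes $n^{O(k^3/\epsilon^2)}$. Thus one outer iteration costs $n^{O(k^3/\epsilon^2)}$. Every successful iteration strictly increases $|\mathcal{A}|$, which is bounded by $n$, so there are at most $n$ iterations, and the total running time is $n\cdot n^{O(k^3/\epsilon^2)}=n^{O(k^3/\epsilon^2)}$.

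The main obstacle is the equivalence "DLI gets stuck precisely when LI gets stuck": one direction needs the element count of a canonical improvement with tail changes to fit inside the $kt$ colors, so that the perfect hash family really covers every relevant improvement, and the other direction needs the dynamic program to produce \emph{some} colorful witness of every colorful improvement, which is exactly why the endpoint-only bookkeeping justified by Claim~\ref{claimimprovement} is indispensable — tracking all vertices displaced by tail changes along a path would inflate the table to quasi-polynomial size and break the time bound. Once this equivalence is secured, the approximation ratio follows immediately from Theorem~\ref{thm:ls3} and the time bound from the product computed above. $\square$
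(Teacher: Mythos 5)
Your proposal is correct and follows essentially the same route as the paper: combine the locality argument of Theorem~\ref{thm:ls3} with the fact that Algorithm CITC finds every colorful canonical improvement with tail changes (using the endpoint-only bookkeeping justified via Claim~\ref{claimimprovement}), note that the $kt$-perfect hash family of size $n^{O(k^3/\epsilon^2)}$ guarantees some coloring renders any existing improvement (at most $t$ sets, hence at most $kt$ elements) colorful, and multiply by the at most $n$ improving iterations to keep the total time $n^{O(k^3/\epsilon^2)}$. No gaps beyond those the paper itself leaves implicit.
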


\section{Lower bound}

In this section, we construct an instance with locality gap $\frac{k+1}{3}$ such that there is no local improvement of size up to $O(n^{1/5})$. Our construction is randomized and extends from the lower bound construction in \cite{ksetpacking2013}. This lower bound matches the performance guarantee of Algorithm DLI.

\begin{theorem}
\label{lowerbound}
For any $t\leq \left(\frac{3e^3n}{k}\right)^{1/5}$, there exist two disjoint collections of $k$-sets $\mathcal{A}$ and $\mathcal{B}$ with $|\mathcal{A}|=3n$ and $|\mathcal{B}|=(k+1)n$, such that any collection of $t$ sets in $\mathcal{B}$ intersect with at least $t$ sets in $\mathcal{A}$.
\end{theorem}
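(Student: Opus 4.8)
The plan is to construct $\mathcal{A}$ and $\mathcal{B}$ by a randomized process and show that with positive probability the resulting instance has the required expansion property. Think of the elements as being organized so that each set in $\mathcal{A}$ has $k$ ``slots'' and each set in $\mathcal{B}$ has $k$ ``slots''; a shared element corresponds to gluing a slot of an $\mathcal{A}$-set to a slot of a $\mathcal{B}$-set. Since $|\mathcal{A}| = 3n$ and $|\mathcal{B}| = (k+1)n$, the total number of $\mathcal{A}$-slots is $3nk$ and the total number of $\mathcal{B}$-slots is $(k+1)nk$; these do not match, so I would let each $\mathcal{B}$-set use some private elements as well. Concretely, I would have every set in $\mathcal{B}$ intersect exactly $3$ sets of $\mathcal{A}$ (using $3$ of its $k$ elements for these intersections, one element per $\mathcal{A}$-neighbor, the remaining $k-3$ elements being private), and I would choose the $3$ neighbors of each $\mathcal{B}$-set uniformly and independently at random among $3$-subsets of $\mathcal{A}$, subject only to the constraint that the resulting bipartite ``conflict'' incidence is simple (no $\mathcal{A}$-set receives two elements from the same $\mathcal{B}$-set). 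This makes $\mathcal{A}$ and $\mathcal{B}$ automatically disjoint, each set has size exactly $k$, and the degrees are balanced on average: $3(k+1)n$ edges into $3n$ $\mathcal{A}$-sets, i.e.\ average $\mathcal{A}$-degree $k+1 \le k$ once we allow a tiny slack — actually I would instead fix each $\mathcal{A}$-set to have degree exactly $k+1$ by choosing a random $(k+1)$-regular-into-$\mathcal{B}$ bipartite multigraph via a configuration/pairing model, which is cleaner for the counting.

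The heart of the argument is a first-moment (union bound) computation showing that the probability that the random instance fails the expansion property is less than $1$. Failure means there exists a set $T \subseteq \mathcal{B}$ with $|T| = t$ whose neighborhood $N(T) \subseteq \mathcal{A}$ has size at most $t-1$. I would bound
\begin{equation}
\Pr[\exists T] \le \sum_{t'\le t}\binom{(k+1)n}{t'}\binom{3n}{t'-1}\Pr[N(T)\subseteq S \text{ for fixed } T, S],
\end{equation}
where $|T| = t'$ and $|S| = t'-1$. For a fixed $T$ and fixed target set $S$ of $\mathcal{A}$-sets, the event $N(T) \subseteq S$ means every one of the $3t'$ incidences emanating from $T$ lands in $S$; in the pairing model each such incidence independently (up to lower-order corrections) hits $S$ with probability roughly $\frac{(t'-1)k}{3nk} = \frac{t'-1}{3n}$, giving a bound like $\left(\frac{t'-1}{3n}\right)^{3t'}$ (I would need to be careful about whether I constrain by $\mathcal{B}$-degree or $\mathcal{A}$-degree; the cleanest is: each $\mathcal{B}$-set picks $3$ neighbors, so the probability all $3$ lie in $S$ is $\le \left(\frac{t'-1}{3n}\right)^3$, hence $\le \left(\frac{t'-1}{3n}\right)^{3t'}$ over the $t'$ sets of $T$). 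Plugging into the union bound and using $\binom{M}{j} \le (eM/j)^j$:
\begin{equation}
\Pr[\exists T] \le \sum_{t' \le t} \left(\frac{e(k+1)n}{t'}\right)^{t'}\left(\frac{3en}{t'-1}\right)^{t'-1}\left(\frac{t'-1}{3n}\right)^{3t'}.
\end{equation}
The dominant factor simplifies to something of order $\left(\text{const}\cdot k \cdot \frac{(t')^2}{n^2}\right)^{t'}$, and this is $o(1)$ precisely when $t' = O((n^2/k)^{1/5})$ after absorbing the polynomial-in-$t'$ slack — matching the stated threshold $t \le (3e^3 n/k)^{1/5}$ up to the explicit constant. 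I would then choose the constant in the pairing model bookkeeping to make the exponent come out to exactly $(3e^3n/k)^{1/5}$, and conclude that the sum is $< 1$, so a good instance exists.

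The main obstacle I anticipate is not the first-moment estimate itself but the combinatorial realizability: I need the random bipartite structure to come from an actual family of genuine $k$-element sets over a common universe, with all sets of $\mathcal{A}$ mutually disjoint and all sets of $\mathcal{B}$ mutually disjoint (only $\mathcal{A}$–$\mathcal{B}$ intersections allowed), and I must ensure no two distinct $\mathcal{B}$-sets share an element with the same $\mathcal{A}$-set through the same element — i.e.\ the incidence is carried by distinct elements. The pairing model produces multigraphs, so I would either (i) condition on simplicity, which holds with probability bounded below by a constant when the degree $k+1$ and the local densities are bounded, absorbing only a constant factor into the union bound; or (ii) build the instance directly by assigning, for each $\mathcal{B}$-set, a fresh private element for each of its $3$ chosen $\mathcal{A}$-neighbors, and checking that an $\mathcal{A}$-set never needs two distinct elements ``assigned from the same $\mathcal{B}$-set'', which is automatic since each $\mathcal{B}$-set contributes at most one element to each $\mathcal{A}$-set. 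Option (ii) is cleanest and I would take it: the universe is the disjoint union of the ``core'' elements (one per $\mathcal{A}$–$\mathcal{B}$ incidence) plus padding elements added to bring every set up to size exactly $k$, exactly as the preliminaries of the paper already allow. The remaining check is then purely the probabilistic inequality above, and the final step is to verify the arithmetic turning the $o(1)$ bound into the explicit constant $3e^3$.
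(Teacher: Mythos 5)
There is a genuine gap, and it is in the construction itself, not the first-moment calculation. Your model has every set of $\mathcal{B}$ intersecting exactly $3$ sets of $\mathcal{A}$ (equivalently, each $\mathcal{A}$-set having degree $k+1$ in the incidence structure). This cannot be realized by genuine disjoint $k$-sets: every $\mathcal{A}$--$\mathcal{B}$ intersection must be carried by an element \emph{of the $\mathcal{A}$-set}, and since the sets of $\mathcal{B}$ are pairwise disjoint, distinct $\mathcal{B}$-sets meeting the same $\mathcal{A}$-set must use distinct elements of it. Hence each $\mathcal{A}$-set supports at most $k$ incidences and the whole instance supports at most $3kn$, while your model demands $3(k+1)n>3kn$. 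Your patch (ii) --- giving each $\mathcal{B}$-set a ``fresh private element'' per chosen neighbor --- does not create an intersection at all, and your remark that the average $\mathcal{A}$-degree ``$k+1\le k$'' (or fixing it to exactly $k+1$ via a pairing model) is exactly the impossibility. The tell-tale sign is that your union bound, taken at face value, would give expansion up to $t=\Theta(n/k)$, far beyond the stated $n^{1/5}$; the model is too strong to be realizable.

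Once you respect the counting constraint, at least $3n$ of the $(k+1)n$ sets of $\mathcal{B}$ must meet only $2$ sets of $\mathcal{A}$, and these degree-$2$ sets are the crux. The paper handles them by making them \emph{deterministic}: the $3n$ $2$-sets form a cycle through the $3n$ sets of $\mathcal{A}$ (the $i$-th $2$-set meets the $(i-1)$-st and $i$-th $\mathcal{A}$-sets), and only the $(k-2)n$ $3$-sets are random, obtained by partitioning the remaining $3kn-6n$ elements of $\bigcup\mathcal{A}$ into triples uniformly at random. The analysis then fixes $\mathcal{A}_t\subseteq\mathcal{A}$ of size $t$ decomposed into $r$ index intervals of the cycle, observes that exactly $t-r$ $2$-sets lie inside, so an unstable collection requires $r$ random $3$-sets to fall entirely inside $\mathcal{A}_t$, and bounds the expected number of such events; the dominant term is $r=1$ (a single interval plus one bad random triple), which is precisely what limits $t$ to order $(n/k)^{1/5}$. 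Placing the degree-$2$ sets randomly instead would create logarithmic-size unstable configurations, so the cycle structure is essential; your proposal never confronts this because it has no degree-$2$ sets at all. The first-moment framework you propose is the right general tool, but it must be run on the realizable mixed-degree construction with the interval bookkeeping, not on the all-degree-$3$ model.
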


\begin{proof}%[Proof of Theorem \ref{lowerbound}]
Consider a universe $U_A$ of $3kn$ elements. Let $\mathcal{A}$ be a collection of $a=3n$ disjoint $k$-sets on $U_A$. We index the sets in $\mathcal{A}$ by 1 to $3n$. Let $\mathcal{B}$ be a collection of $(k+1)n$ disjoint $k$-sets, such that every set induced on $U_A$ is a 2-set or a 3-set. There are $b_2=3n$ 2-sets covering $m_2=6n$ elements and $b_3=(k-2)n$ 3-sets covering $m_3=3kn-6n$ elements in $\mathcal{B}$. We index the 2-sets in $\mathcal{B}$ by 1 to $3n$. The $i$-th 2-set intersects with the $(i-1)$-th and the $i$-th set in $\mathcal{A}$ (the 0-th set is the $n$-th set in $\mathcal{A}$). The 3-sets are constructed by partitioning the elements not covered by 2-sets in $U$ into groups of three uniformly at random.

Consider an arbitrary collection $\mathcal{A}_t$ of $t$ sets in $\mathcal{A}$. We compute the probability that there are $t$ sets $\mathcal{B}_t$ in $\mathcal{B}$ which are contained entirely in $\mathcal{A}_t$. We call such an event {\it unstable}. Assume there are $t_2$ 2-sets $\mathcal{B}_t^2$ and $t_3$ 3-sets $\mathcal{B}_t^3$ in $\mathcal{B}_t$. Suppose the sets in $\mathcal{A}_t$ belong to $r$ disjoint index intervals, $i_1,....,i_r\geq 1$ for $1\leq r\leq t$. Here we use modular $3n$ to compute the connectivity. Then $t_2=i_1-1+i_2-1+\cdots +i_r-1=i_1+\cdots+i_r-r=t-r$ and $\mathcal{B}_t^2$ cover $2(t-r)$ elements. $t_3=r$ and $\mathcal{B}_t^3$ cover $3r$ elements from a set of $m_t=kt-2(t-r)=(k-2)t+2r$ elements.

Let $\tau(m)$ be the number of ways to partition $m$ elements into $m/3$ disjoint sets. We have
\begin{equation}
\label{tau}
\tau(m)=\frac{m!}{(3!)^{m/3}(m/3)!}.
\end{equation}

Let $\Pr(t,r)$ be the probability that all $t_3$ 3-sets are contained in $\mathcal{A}_t$. We have
\begin{equation}
\label{prunstable}
\Pr(t,r) = {m_t\choose 3r}\cdot \frac{\tau(3r)\tau(m_3-3r)}{\tau(m_3)}=\frac{{m_t\choose 3r}\cdot{m_3/3\choose r}}{{m_3\choose 3r}}.
\end{equation}

Let $U_{t,r}$ be the number of all unstable events summing over the distribution of $\mathcal{B}$. There are ${t-1\choose r-1}$ positive integer solutions of the function $t=i_1+i_2+\cdots +i_r$. There are ${a-t\choose r}$ intervals of length $i_1,...,i_r$ in index range 1 to $a$. Hence, the expected number of unstable events can be estimated as follows,

\begin{eqnarray}
\label{expunstable1}
\mathbb{E}[U_{t,r}]&=&{t-1\choose r-1} \cdot {a-t\choose r} \cdot \Pr(t,r)
\approx {t\choose r} \cdot {a-t\choose r} \cdot \frac{{(k-2)t+2r\choose 3r}\cdot{(k-2)a/3\choose r}}{{(k-2)a\choose 3r}} \nonumber \\
&\leq& \left(\frac{\frac{t}{r}\cdot\frac{a-t}{r}\cdot(\frac{(k-2)t+2r}{3r})^3\cdot \frac{(k-2)a}{3r}}{(\frac{e(k-2)a}{3r})^3} \right)^r = \left(\frac{t(a-t)((k-2)t+2r)^3}{3e^3r^3(k-2)^2a^2} \right)^r
\end{eqnarray}

The inequality in (\ref{expunstable1}) follows from an upper and lower bounds of the combinatorial number as

\begin{equation}
\label{nchoosekbound}
\left(\frac{n}{k}\right)^k \leq {n\choose k} \leq \left(\frac{en}{k}\right)^k.
\end{equation}

$\mathbb{E}[U_{t,r}]$ can be further bounded from (\ref{expunstable1}) as,
\begin{eqnarray}
\label{expunstable2}
\mathbb{E}[U_{t,r}] &=& \left(\frac{t(a-t)}{3e^3(k-2)^2a^2}\cdot\left(\frac{(k-2)t}{r}+2\right)^3 \right)^r \nonumber \\
&\leq& \left(\frac{t}{3e^3(k-2^2a^2)}\cdot\frac{k^3t^3}{r^3}\right)^r \leq \left(\frac{kt^4}{3e^3ar^3}\right)^r.
\end{eqnarray}

Since $t\leq \left(\frac{3e^3n}{k}\right)^{1/5}=\left(\frac{e^3a}{k}\right)^{1/5}=t_0$ by assumption, we have $\frac{kt^4}{3e^3ar^3}\leq \frac{kt^4}{3e^3a}\leq \frac{1}{3}(\frac{k}{e^3a})^{1/5}<\frac{1}{2}$ as $a\gg 1$. Hence by summing up $r$ from 1 to $t$ and $t$ from 1 to $t_0=\left(\frac{e^3a}{k}\right)^{1/5}$ in (\ref{expunstable2}), we have,
\begin{eqnarray}
\label{expunstable3}
\sum_{t=1}^{t_0}\sum_{r=1}^t \mathbb{E}[U_{t,r}] &<& \sum_{t=1}^{t_0} \sum_{r=1}^t \left(\frac{kt^4}{3e^3a}\right)^r < \sum_{t=1}^{t_0} \frac{2kt^4}{3e^3a} < \frac{2k}{3e^3a}\cdot t_0^5 <1.
\end{eqnarray}

Therefore, there exists some collection $\mathcal{B}$ given $\mathcal{A}$ which does not contain any unstable collection of size at most $t$.  $\square$
\end{proof}

\section{Conclusion}

In this paper, we propose a new polynomial-time local search algorithm for the $k$-Set Packing problem and show that the performance ratio is $\frac{k+1}{3}+\epsilon$ for any $\epsilon>0$. While the approximation guarantee is the same as for Cygan's algorithm \cite{bestksetpacking}, our algorithm has a better running time, which is singly exponential to $\frac{1}{\epsilon^2}$. We also give a matching lower bound, which shows that any algorithm using local improvements of size at most $O(n^{1/5})$ cannot have a better performance guarantee. This indicates that this is possibly the best result that can be achieved by a local improvement algorithm for the $k$-Set Packing problem. On the other hand, algorithms based on LP/SDP for the $k$-Set Packing problem are far from being well understood. It is interesting to explore possibilities using that approach. A more general open question is to further close the gap between the upper bound $\frac{k+1}{3}+\epsilon$ and the lower bound $\Omega(\frac{k}{\log k})$ of the $k$-Set Packing problem \cite{ksetpackinglowerbound}.

%\newpage
\bibliographystyle{plain}
\bibliography{ksetpackingisco}

\newpage
\section*{Appendix}

%\appendix

We give detailed running time analysis of Algorithm CITC in this appendix. \\

{\bf Running time of finding colorful tail changes.}
The time to complete the search of all tail changes for every edges by enumerating subsets in $\mathcal{S}\setminus \mathcal{A}$ of size at most $\frac{2(k-1)}{\epsilon}$ is at most
\begin{equation}
\label{timetailchange}
n^2\cdot \sum_{i=1}^{2(k-1)/\epsilon} {n\choose i}\cdot (i+1)k \leq \frac{2k^2}{\epsilon} n^{\frac{2(k-1)}{\epsilon}+2}=n^{O(\frac{k}{\epsilon})}.
\end{equation}

The time to find all consistent tail changes for every vertex of degree at least 3 is in the order of
\begin{equation}
\label{timeconsistenttc}
n{k\choose 2}\cdot (n^{O(\frac{k}{\epsilon})})^{k-2} \cdot O(\frac{k^2}{\epsilon}) = n^{O(\frac{k^2}{\epsilon})}.
\end{equation}

Combining (\ref{timetailchange}) and (\ref{timeconsistenttc}), the time of finding colorful groups of tail changes for every vertex of degree at least 3 is $n^{O(\frac{k^2}{\epsilon})}$. \\

In the following, the length of a path/cycle is at most $p=\frac{4}{\epsilon}\log n$. The number of colors is at most $kt$, where $t=p\cdot \frac{2(k-1)(k-2)}{\epsilon}$. This part is similar as \cite{ksetpacking2013}. \\

{\bf Running time of finding colorful paths.} There are at most $2^{kt}$ sets of colors. For every path of length $j$, there are at most $n^2 2^{kt}$ entries in $\mathcal{P}(u,v,j,C)$. It takes time $O(nk)$ to check if $C'\cup C((w,v))=C$ and $C'\cap C((w,v))=\emptyset$. Hence, the total time to fill in the computation table is in the order of
\begin{equation}
\label{timepath}
\sum_{j=1}^p n^2 2^{kt} \cdot nk = n^3kp 2^{kt}.
\end{equation}

{\bf Running time of finding canonical improvements.}
\begin{enumerate}
    \item The first type of canonical improvement. For every vertex $s\in A$, and for all disjoint collections $C_1,C_2$ of $kc_1,kc_2$ colors respectively, such that $c_1+c_2\leq t$, check if there exists integers $l_1,l_2$ with $l_1+l_2\leq p$, such that $\mathcal{P}(s,s,l_1,C_1)=1$ and $\mathcal{P}(s,s,l_2,C_2)=1$. This takes time in the order of
        \begin{equation}
        \label{timeci1}
        n\sum_{c_2=1}^{t}\sum_{c_1=1}^{t-c_2} 2^{kc_1}2^{kc_2}p^2\cdot tk \leq np^2t^3 k 2^{kt}.
        \end{equation}

    \item The second type of canonical improvement. For all pairs of vertices $s,t\in A$, and for all disjoint collections $C_1,C_2,C_3$ of $kc_1,kc_2,kc_3$ colors respectively, such that $c_1+c_2+c_3\leq t$, check if there exists integers $l_1,l_2,l_3$ with $l_1+l_2+l_3\leq p$, such that $\mathcal{P}(s,s,l_1,C_1)=1$, $\mathcal{P}(t,t,l_2,C_2)=1$ and $\mathcal{P}(s,t,l_3,C_3)=1$. This takes time in the order of
        \begin{equation}
        \label{timeci2}
        n^2\sum_{c_3=1}^t\sum_{c_2=1}^{t-c_3}\sum_{c_1=1}^{t-c_2-c_3} 2^{kc_1}2^{kc_2}2^{kc_3}p^3 \cdot tk \leq n^2p^3 t^4 k 2^{kt}.
        \end{equation}

    \item The third type of canonical improvement. For all pairs of vertices $s,t\in A$, and for all disjoint collections $C_1,C_2,C_3$ of $kc_1,kc_2,kc_3$ colors respectively, such that $c_1+c_2+c_3\leq t$, check if there exists integers $l_1,l_2,l_3$ with $l_1+l_2+l_3\leq p$, such that $\mathcal{P}(s,t,l_1,C_1)=1$, $\mathcal{P}(s,t,l_2,C_2)=1$ and $\mathcal{P}(s,t,l_3,C_3)=1$. This takes time in the order of
        \begin{equation}
        \label{timeci3}
        n^2\sum_{c_3=1}^t\sum_{c_2=1}^{t-c_3}\sum_{c_1=1}^{t-c_2-c_3} 2^{kc_1}2^{kc_2}2^{kc_3}p^3 \cdot tk \leq n^2p^3 t^4 k 2^{kt}.
        \end{equation}

    \item The fourth to sixth type of canonical improvement. The running time is dominated by the fifth type, namely when a path from $S$ to $T$ and a cycle of $S$ form an improvement. For every vertex $s,t\in A$, and for all disjoint collections $C_1,C_2$ of $kc_1,kc_2$ colors respectively, such that $c_1+c_2\leq t$, check if there exists integers $l_1,l_2$ with $l_1+l_2\leq p$, such that $\mathcal{P}(s,s,l_1,C_1)=1$ and $\mathcal{P}(s,t,l_2,C_2)=1$. Moreover, backtrack in the computation table, check if the sets indeed form an improvement. This takes time in the order of
        \begin{equation}
        \label{timeci4}
        n^2\sum_{c_2=1}^{t}\sum_{c_1=1}^{t-c_2} 2^{kc_1}2^{kc_2}p^2\cdot tk \leq n^2p^2t^3 k 2^{kt}.
        \end{equation}

\end{enumerate}

\end{document}